\pdfoutput=1
%


\documentclass[11pt,3p]{elsarticle}

\usepackage[T1]{fontenc} 
\usepackage{amsmath} 
\usepackage{amssymb} 

\usepackage{color}
\usepackage[pdftex,colorlinks,allcolors=blue]{hyperref}
\usepackage{enumitem}	
\usepackage{algorithm}	
\usepackage[noend]{algorithmic}  


\newcommand{\bsy}[1]{\boldsymbol{#1}} 

\newcommand{\dbldag}{\raisebox{1pt}{$\sst\dag\!\dag$}} 
\newcommand{\diag}{\mbox{\rm diag}}
\newcommand{\divides}{\,|\,}    
\newcommand{\dividesnot}{\,\notmid\,}    


\newcommand{\eqdef}{\ensuremath{\stackrel{\mbox{\tiny\textsf{def}}}{=}}}  

\newcommand{\Fl}{\ensuremath{F_{\ell}}}


\newcommand{\lowergam}[1]{\ensuremath{\gamma_{\raisebox{-2pt}{$\ssst#1$}}}} 
\DeclareMathOperator{\Lord}{ord_L}				

\newcommand{\notmid}{\mbox{$\hspace{-1.5pt}\not\hspace{2.4pt}\mid\hspace{1.5pt}$}} 

\newcommand{\Psub}[1]{\mbox{$\mathbf{P}_{\! #1}$}}
\newcommand{\regtrademark}{\raisebox{3.5pt}{\ooalign{\hfil\raise.07ex
    \hbox{\scriptsize R}\hfil\crcr\mathhexbox20D}}}
\newcommand{\Rl}{\ensuremath{R_{\ell}}}

\newcommand{\ssst}{\scriptscriptstyle}
\newcommand{\sst}{\scriptstyle}


 


\setlength{\arraycolsep}{0.2em}	

\newtheorem{theorem}{Theorem}[section]
\newtheorem{lemma}[theorem]{Lemma}
\newtheorem{corollary}[theorem]{Corollary} 

\newdefinition{definition}[theorem]{Definition} 
\newdefinition{example}[theorem]{Example}
\newcommand{\rem}{\noindent\textbf{Remarks.\ }} 
\newproof{proof}{Proof} 

\begin{document}

\begin{frontmatter}
\title{Factoring Perfect Reconstruction Filter Banks into Causal Lifting Matrices: A Diophantine Approach
\footnote{published in \emph{Journal of Computational Algebra} 12 (2024) 100024, DOI: 10.1016/j.jaca.2024.100024}}
\date{October 2024}

\author[1]{Christopher M.\ Brislawn\corref{cor1}}
\ead{cbrislawn@yahoo.com}
\cortext[cor1]{Corresponding author}
\affiliation[1]{organization={Los Alamos National Laboratory}, addressline={P.O. Box 1663},
postcode={Los Alamos}, city={NM}, country={USA}}

\begin{abstract}
The elementary theory of   bivariate linear Diophantine equations over polynomial rings  is used to construct  causal lifting factorizations (elementary matrix decompositions) for causal two-channel FIR perfect reconstruction  transfer matrices and wavelet transforms. The Diophantine approach  generates causal  factorizations satisfying certain polynomial degree-reducing inequalities, enabling a new  factorization strategy  called the \emph{Causal Complementation Algorithm}.  This provides a causal (i.e., polynomial, hence \emph{realizable}) alternative to the noncausal  lifting scheme developed by Daubechies and Sweldens using the Extended Euclidean Algorithm for Laurent polynomials. The new approach replaces the Euclidean Algorithm with Gaussian elimination employing
a slight generalization of polynomial  division that ensures existence and uniqueness of quotients whose remainders satisfy user-specified divisibility constraints.  The Causal Complementation Algorithm is shown to be more general than the causal version of the Euclidean Algorithm approach by generating additional causal lifting factorizations beyond those obtainable using the polynomial Euclidean Algorithm. 
\end{abstract}

\begin{keyword}
Causality, causal complementation, digital signal processing, Diophantine equation, elementary matrix decomposition, Euclidean Algorithm, filter bank, lifting factorization, polyphase matrix, wavelet transform
\MSC[2020]{13P25, 15A54, 42C40, 65T60, 94A29}
\end{keyword}
\end{frontmatter}

%
\section{Introduction}\label{sec:Intro}
Multirate filter banks are the digital signal processing constituents of wavelet transforms.
A \emph{discrete wavelet transform} (DWT) is a cascade of 
filter banks that decomposes discrete-time signals into time-frequency components, such as  the exponentially scaled Mallat  decomposition. 
Under suitable conditions DWTs correspond 
to analog signal representations called \emph{multiresolution analyses}~\cite{Mallat89c,Daub92,Meyer93,Mallat99}; such  decompositions  offer a wealth of  data representations featuring joint time-frequency localization and fast digital implementations.  
For one measure of their success, 
the number of U.S.\ patents containing the term ``wavelet[s]'' is now in the thousands. For more examples of success, \cite[\S II]{Bris:13b:TIT}  surveys  multirate filter banks in digital communication coding standards.

Figure~\ref{fig:FB} depicts the Z-transform (complexified frequency domain) representation of a two-channel multirate  filter bank with input $X(z)\eqdef\sum_i x(i)z^{-i}$~\cite{Daub92,Vaid93,VettKov95,StrNgu96,Mallat99}. (The author recommends references~\cite{Vaid93,StrNgu96} to readers who are less familiar with, and seek a guide to, the engineering language and concepts used in this subject area.)
The downarrows represent 2:1 subsampling, which halves the sampling rate of the analysis-filtered output from $H_0$ and $H_1$. Uparrows restore the original sampling rate to subbands $Y_0$ and $Y_1$ by inserting zeros prior to synthesis filtering with $G_0$ and $G_1$, then summing to get the reconstructed signal, $\widehat{X}(z)$.
Figure~\ref{fig:FB} is  a \emph{perfect reconstruction} (PR) filter bank if the transfer function $\widehat{X}(z)/X(z)$ is a monomial, i.e., a constant multiple of a delay.   

For suitably defined \emph{polyphase transfer matrices} $ \mathbf{H}(z)$ and $\mathbf{G}(z)$ (denoted in boldface) the system in Figure~\ref{fig:FB} is mathematically equivalent to the \emph{polyphase-with-delay} (PWD)  representation  in Figure~\ref{fig:PWD}~\cite{Vaid93,StrNgu96}.  
The polyphase analysis  matrix, $\mathbf{H}(z)$, is 
the frequency-domain representation of a  linear translation-invariant operator acting boundedly on 
vector-valued discrete-time signals in 
$\ell^2\bigl(\mathbb{Z},\,\mathbb{C}^2\bigr)$.
By Cramer's Rule, a Laurent polynomial matrix $\mathbf{H}(z)$ is the polyphase matrix of a finite impulse response (FIR) PR filter bank with FIR inverse if and only if, for some  gain  $\hat{a}\neq 0$ and delay $\hat{d}\in\mathbb{Z}$, it satisfies
\begin{equation}\label{PRFB}
|\mathbf{H}(z)| \eqdef \det\mathbf{H}(z) = \hat{a}z^{-\hat{d}}.
\end{equation}
In this paper all filter banks are assumed to be FIR systems  satisfying~\eqref{PRFB}.
\begin{figure}[t]
  \begin{center}
    \includegraphics[page=2]{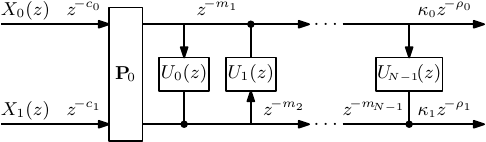}
    \caption{A two-channel multirate analysis/synthesis filter bank.}
    \label{fig:FB}
  \end{center}
\end{figure}
\begin{figure}[t]
  \begin{center}
    \includegraphics[page=3]{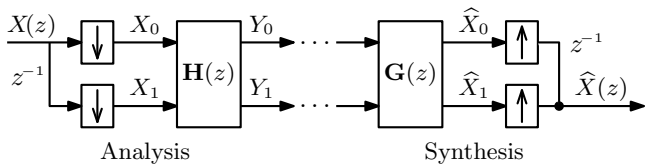}
    \caption{The  polyphase-with-delay (PWD) filter bank representation.}
    \label{fig:PWD}
  \end{center}
  \vspace*{-1em}
\end{figure}

\subsection{Background and Relation to Other Work}\label{sec:Intro:Background}
Many  structures for fast, customizable implementations of PR filter banks   decompose the transfer matrices 
$\mathbf{H}(z)$ and $\mathbf{G}(z)$ into cascades (matrix products) of simpler  building blocks. Examples include  decompositions for particular classes such as paraunitary or linear phase filter banks~\cite{TranQueirozNguyen:00:Linear-phase-perfect-reconstruction,GaoNguyenStrang:01:M-channel-PUFBs,OrainTranHellerNgu:01:paraunitary-linear-phase,GanMaNguyenEtal:02:on-completeness,OrainTranNgu:03:regular-LPFBs,GanMa:TCAS-II-04:simplified-lattice,GanMa:TSP-04:simplified-order-one,MakMuthRed:04:Eigenstructure-approach,XuMakur:09:Arbitrary-Length-LPPRFB} 
and  low-complexity structures like cosine-modulated filter banks~\cite{Vaid93,StrNgu96,PainterSpanias:00:Perceptual-Coding,KhaTuanNgu:ICASSP07:SDP-Cos-Mod-FB,KhaTuanNguyen:09:Cos-Mod-FBs}.
The cascade structures studied in this paper are \emph{lifting factorizations}~\cite{Sweldens96,Sweldens:98:SIAM-lifting-scheme,DaubSwel98} of $\mathbf{H}(z)$ and $\mathbf{G}(z)$ into elementary  (``lifting'') matrices $\mathbf{S}(z)$ of the form 
\begin{align}\label{def_lambda_upsilon}
\mathbf{S}(z) = 
\begin{bmatrix}
1 & 0\\
S(z) & 1\vspace{-1pt}
\end{bmatrix}\text{ or\ \ }
\mathbf{S}(z) = 
\begin{bmatrix}
1 & S(z)\\
0 & 1\vspace{-2pt}
\end{bmatrix}.
\end{align}
%
Lifting  figures prominently in image communication standards like the ISO/IEC JPEG~2000 image coding standards~\cite{ISO_15444_1,ISO_15444_2,TaubMarc02,BrisQuirk03,AcharyaTsai:04:JPEG2000-Standard,Lee:05:J2K-Retrospective}  and  CCSDS Recommendation 122.0 for Space Data System Standards~\cite{CCSDS_122.0:2005,YehArmKielyEtal:AC-05:CCSDS-image-comp}.

The   lifting scheme of Daubechies and Sweldens~\cite{DaubSwel98}  factors   \emph{unimodular} polyphase matrices (matrices of determinant~1) using   computational byproducts of the {Extended Euclidean Algorithm} (EEA)~\cite{MacLaneBirkhoff67,BachShallit:96:Algo-Number-Theory,Shoup:05:Comp-Number-Theory,GathenGerhard:13:Modern-Computer-Algebra} for computing greatest common divisors (gcds) over the  Laurent polynomials  in $z$ and $z^{-1}$, denoted  $\mathbb{C}[z,z^{-1}]$.   The term ``lifting'' seems to  originate with Sweldens~\cite{Sweldens96}, although the result now known as the Lifting Theorem had  appeared earlier in the dissertation of  Herley~\cite{Herley93,VetterliHerley:92:Wavelets-filter-banks}. The EEA was not used in~\cite{Sweldens96}, which used lifting to \emph{synthesize} high-order wavelets  rather than for decomposing a given wavelet transform. The Euclidean Algorithm does appear, however, in  Daubechies' proof of Bezout's Theorem for polynomials~\cite[Theorem~6.1.1]{Daub92}. Daubechies credits Y.~Meyer \cite[Chapter~6, endnote~2]{Daub92} for suggesting the use of Bezout's Theorem to factor trigonometric polynomials when constructing smooth wavelets, so  Meyer deserves some credit for the closely related use of the Euclidean Algorithm in  lifting factorization.

Lifting was anticipated  by others, including Tolhuizen et~al.~\cite{TolHollKal:95:realiz-biorthog-M-D}, who studied  existence of elementary matrix decompositions for \emph{multivariate} matrix polynomials. Cohn~\cite{Cohn:66:structure-ring} constructed an example of a $2\times 2$ bivariate matrix polynomial with no elementary matrix decompositions, but Suslin~\cite{Suslin:77:structure-special} subsequently proved  that elementary matrix decompositions exist for all multivariate matrix polynomials of size $3\times 3$ or greater. A constructive proof  using Gr\"obner bases was found by Park and Woodburn~\cite{ParkWoodburn:95:algorithmic-proof}.   
The univariate  challenge  is not  the existence of elementary matrix decompositions; as noted in~\cite{TolHollKal:95:realiz-biorthog-M-D}, univariate  polynomial (resp., Laurent polynomial) matrices can always be factored into elementary matrices  using the polynomial (resp., Laurent polynomial) EEA.  
The univariate  challenge is, instead, to define and systematically generate ``good'' lifting decompositions well-suited for computational applications.

Since the appearance of~\cite{DaubSwel98}, some work  on  lifting has studied  two-channel filter banks \cite{AdamsKossentini:00:Reversible-wavelet-transforms,MaslenAbbott:00:Automation-lifting-factorisation,ShuiBaoEtal:02:Two-channel-adaptive-biorthogonal,AdamsWard:03:Symmetric-extension-compatible,LiaoMandalEtal:04:Efficient-architectures-lifting-based}  but much  has focused on  $M>2$ channels \cite{Tran:02:M-channel-linear-phase,Tran:02:Rational-LPPRFBs,ChenAmaratun:03:M-channel-lifting-factorization,OrainTranNgu:03:regular-LPFBs,ShuiBao:04:M-band-biorthogonal-interpolating,ChenOrainAmara:05:Dyadic-based-factorizations,IwamuraTanakaIkehara:07:Efficient-Lifting,TanIkeNgu:08:LPFB-Lattice-Structure,SuzukiIkeharaNguyen:12:Generalized-Block-Lifting} and related methods like linear predictive transform coding \cite{WengChenVaid:10:General-Triang-Decomp,WengVaid:12:GTD-Optimizing-PRFBs}.
The limitation of mathematical technique in~\cite{DaubSwel98} (and in most of the  literature since~\cite{DaubSwel98}) to linear algebra and the Euclidean Algorithm strikes the author as unduly restrictive.   Herley's dissertation~\cite{Herley93,VetterliHerley:92:Wavelets-filter-banks} noted a connection to  Diophantine equations, but that idea was not followed up in subsequent papers, and the use of abstract algebra in filter bank theory has remained largely unexplored with only occasional exceptions~\cite{FooteMirchandEtal:00:Wreath-Product-Group,MirchandFooteEtal:00:Wreath-Product-Group,FooteMirchandEtal:04:Two-Dimensional-Wreath-Product,Park:04:Symbolic-computation-signal,LebrunSelesnic:04:Grobner-bases-wavelet,DuBhosriFrazho:10:FB-commutant-lifting,HurParkZheng:14:Multi-D-Wavelet}.  

We shall focus on gaining a deeper understanding of the two-channel case, which has  yielded  significant applications to date such as  digital communications coding and which informs our understanding of the more difficult $M$-channel case.  The two-channel case has also  proven amenable to  nonlinear algebraic methods.
After  serving on the JPEG standards committee (ISO/IEC~JTC1/SC29/WG1)  the author  developed a group-theoretic approach to  lifting  for two-channel unimodular linear phase FIR PR filter banks~\cite{BrisWohl06,Bris:10:GLS-I,Bris:10b:GLS-II}.
Surprisingly, it was shown  that  factoring  linear phase transfer matrices using linear phase lifting filters produces  elementary matrix decompositions that are  \emph{unique} within  ``universes'' of   factorizations that the author called \emph{group lifting structures}.  
These uniqueness results were used~\cite{Bris:13b:TIT} to characterize the group of unimodular  whole-sample symmetric (WS, or odd-length linear phase) filter banks  up to isomorphism. 
It was also shown that the class of unimodular half-sample symmetric (HS, or even-length linear phase) filter banks, which  is \emph{not}  a group, can nonetheless be partitioned into \emph{cosets} of similar groups.  An overview  is given in~\cite{Bris:13:FFT}. 

The present paper is the author's  response to  Daubechies and Sweldens~\cite{DaubSwel98}, who sacrificed causality to exploit nonuniqueness of \emph{Laurent} polynomial division.  
A time-dependent  operator is called \emph{causal} if its output at time $t$ depends only on input data received at times  prior to  $t$; i.e., the operator does not need to ``see into the future'' to compute its output.
A discrete-time FIR filter or filter bank is causal if its  transfer function is \emph{polynomial} in $z^{-1}$ since terms with positive powers of $z$ would indicate a dependence on ``future'' inputs. The transfer functions of  noncausal systems are given by \emph{Laurent} polynomials in both $z$ and $z^{-1}$. Causality is physically necessary for ``realizing'' engineering systems that operate continuously on streaming input, but it also has advantages in computing contexts that enjoy random access to preloaded memory. 

A fundamental limitation of modern High Performance Computing platforms is the latency incurred while writing and reading data to and from the huge amounts of memory provided on such machines. In many cases memory I/O dominates computing performance over and above the runtime cost of the actual calculations being performed on the CPUs. Causal realizations of mathematical operators ensure that the input data need only be read into ``near'' memory (buffers or cache) once, as opposed to a noncausal realization that accesses the same data in ``far'' memory multiple times as both ``future'' and ``past'' samples. Causality also ensures that the memory holding the input data is effectively ``freed up'' once it has been read, allowing the algorithm to be performed \emph{in situ} by writing the latest output samples into the freed memory that previously held the latest inputs. This is a significant advantage for applications with extremely large input arrays since it eliminates the need to allocate memory for an equally large target array to hold the computation's output.

Converting noncausal unimodular factorizations  into ``equivalent'' minimal causal realizations suitable for  implementation is not  straightforward, so sacrificing  realizability as Daubechies and Sweldens have done is a big price to pay for factorization options.  It begs the question of just how many factorizations  Laurent division  creates; there appears to be no systematic way to enumerate   
``all possible'' clever  uses of Laurent division.  It also highlights the lack of a  definition in~\cite{DaubSwel98} of what, exactly, makes an elementary matrix decomposition  a \emph{lifting} factorization.  E.g., what is the difference between the ``nice''  factorizations  in~\cite{DaubSwel98} and  pathological factorizations like~\cite[Proposition~1 and Example~1]{Bris:10:GLS-I} or~\cite[Example~1]{Bris:13b:TIT}?

Given that the goal is to obtain elementary matrix decompositions of FIR transfer matrices, it seems natural to approach the  problem via elementary row and column operations.
By taking an algebraic perspective  we  develop lifting  from basic properties of linear Diophantine equations  (LDEs) over polynomial rings.  We show that much of lifting factorization, including the Lifting Theorem~\cite{Herley93,VetterliHerley:92:Wavelets-filter-banks,Sweldens96}, follows from  basic factorization theory   in commutative rings and does not even involve polynomials per se.  One is led naturally from  abstract algebraic considerations to issues that really require polynomials and causality,  such as degree inequalities and uniqueness results.  This  leads to a  new lifting  strategy based on elementary row and column operations, demonstrated by example in this paper and developed formally in work in progress~\cite{Bris:24:Causal-Complementation-Algorithm}
that we call the \emph{Causal Complementation Algorithm} (CCA).

\subsection{Degree-Reducing Causal Complements and Degree-Lifting Factorizations}\label{sec:Intro:Deg}
Lifting factorization of a transfer matrix $\mathbf{H}(z)$ involves factoring off elementary matrices~\eqref{def_lambda_upsilon}.
E.g.,  left-factorization of an upper-triangular elementary matrix reduces  row~0 (the top row) of  $\mathbf{H}$; suppressing the $z$ for clarity, 
\begin{align}\label{lifting_factorization}
\mathbf{H}(z) 
\eqdef
\begin{bmatrix}
E_0 &E_1\\
F_0 & F_1\vspace{-2pt}
\end{bmatrix}
=
\begin{bmatrix}
1 & S\\
0 & 1\vspace{-2pt}
\end{bmatrix}
\hspace{-3pt}
\begin{bmatrix}
R_0 & R_1\\
F_0 & F_1\vspace{-2pt}
\end{bmatrix}.
\end{align}
To get~\eqref{lifting_factorization}, pick a column index $j\in\{0,1\}$ and divide the pivot $F_j$ into $E_j$  using polynomial  division to get $E_j=F_jS + R_j$, where  remainder $R_j$ satisfies the degree-reducing condition  $\deg(R_j) < \deg(F_j).$ The corresponding remainder in column $j'\eqdef1-j$ of~\eqref{lifting_factorization} for the same lifting filter (or quotient) $S$ is \emph{defined} to be   $R_{j'}\eqdef E_{j'} - F_{j'}S$, so that 
\[  (R_0,R_1) = (E_0,E_1) - S(F_0,F_1), \]
a Gaussian elimination  reduction of row~0. There are  two such lifting factorizations depending on the column in which we perform the polynomial division.   Analogous  left-factorizations of \emph{lower}-triangular elementary matrices correspond to Gaussian elimination  reductions of row~1 in $\mathbf{H}$ while right-factorizations  correspond to column reductions.

Given a  filter $H_i(z)$,  $i\in\{0,1\}$, Herley and Vetterli~\cite{Herley93,VetterliHerley:92:Wavelets-filter-banks} call  a second filter  $H_{i'}(z)$, where  $i'\eqdef 1-i$, a  \emph{complementary filter} if $\{H_0(z),H_1(z)\}$ is a  PR filter bank.  FIR filters $H_0$ and $H_1$ are complementary  if and only if their polyphase matrix satisfies~\eqref{PRFB}, motivating the following polyphase notion of complementary filters in the causal case (i.e., for polynomials in $z^{-1}$).
\begin{definition}\label{defn:CausalComplement}
Let   $F_0,\,F_1$ be causal FIR polyphase filters satisfying $\gcd(F_0,F_1) =  z^{-d}$.
Given  constant $\hat{a}\neq 0$ and integer $\hat{d}\geq d$,
an ordered pair of causal filters $(R_0,R_1)$  is a \emph{causal complement to  $(F_0,F_1)$ for inhomogeneity  $\hat{a}z^{-\hat{d}}$}  if it satisfies the linear Diophantine polynomial equation
\begin{align}\label{complement}
R_0(z)F_1(z) - R_1(z)F_0(z) = \hat{a}z^{-\hat{d}}.
\end{align}
%
We say a causal complement  $(R_0,R_1)$  is  \emph{degree-reducing in} $\Fl,\, \ell\in\{0,1\}$,  if  
\begin{align}\label{deg-reducing}
\deg(\Rl) < \deg(\Fl) - \deg\gcd(F_0,F_1).
\end{align}
\end{definition}

\begin{figure}[t]
  \begin{center}
    \includegraphics[page=1]{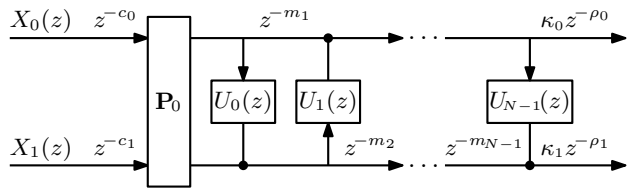}
    \caption{Standard causal lifting form for a FIR PR filter bank.  In this example the  initial lifting matrix, $\mathbf{U}_0(z)$, corresponding to the left-most lifting filter, $U_0(z)$, is lower-triangular, and the number of lifting steps,  $N$, is odd.}
    \label{fig:StdCausalForm}
  \end{center}
\vspace{-1em}
\end{figure}
\rem   In the language of Definition~\ref{defn:CausalComplement}, the CCA constructs factorizations of the form~\eqref{lifting_factorization} by using polynomial division to compute degree-reducing causal complements $(R_0,R_1)$  for inhomogeneity  $\hat{a}z^{-\hat{d}}=|\mathbf{H}(z)|$ without using the EEA.  The reason for the  correction term $\deg\gcd(F_0,F_1)$ in~\eqref{deg-reducing} is explained in Section~\ref{sec:Polynomial:SizeReducing} following Definition~\ref{defn:SizeReducing}; existence and uniqueness of degree-reducing causal complements are addressed by Theorem~\ref{thm:LDDRT}.  

In the course of factorization, the degree-reducing property~\eqref{deg-reducing} eventually drives one of the remainders $R_i$ to zero, causing factorization to terminate. The result can then be put into \emph{standard causal  lifting form} (cf.~Figure~\ref{fig:StdCausalForm}),
%
\begin{align}\label{std_causal_form}
\mathbf{H}(z) =
\diag(\kappa_0 z^{-\rho_0},\kappa_1 z^{-\rho_1})\mathbf{U}_{\! N-1}(z)\bsy{\Lambda}_{N-1}(z)\cdots
\mathbf{U}_1(z)\bsy{\Lambda}_{1}(z)\mathbf{U}_0(z)\Psub{0}\,\diag(z^{-c_0},z^{-c_1}).
\end{align}
The causal matrices $\mathbf{U}_n(z)$ in~\eqref{std_causal_form} are alternating upper- and lower-triangular   lifting matrices~\eqref{def_lambda_upsilon} with causal lifting filters $U_n(z)$.  The matrices $\bsy{\Lambda}_n(z)$ are diagonal delay matrices. They have  a single delay factor $z^{-m_n},$ $m_n\geq 0,$ in the upper channel, 
$\bsy{\Lambda}_n(z) = \diag(z^{-m_n},1)$, or (resp.) the lower channel, $\diag(1,z^{-m_n})$,
if and only if $\mathbf{U}_n(z)$ is upper-triangular (resp., lower-triangular).
\Psub{0} is either the identity, $\mathbf I$, or the \emph{swap matrix}, 
\begin{align}\label{swap_matrix}
\mathbf{J} &\eqdef
\begin{bmatrix}
0 & 1\\
1 & 0\vspace{-1pt}
\end{bmatrix},\quad
\mathbf{J}^{-1} = \mathbf{J}.
\end{align}
Using the CCA, every causal FIR PR  filter bank has a  factorization (many, in fact) in standard causal  lifting form.    We shall see that the ability to factor off diagonal delay matrices  $\bsy{\Lambda}_n(z)$ is a major advantage the CCA holds over the causal version of the EEA lifting factorization method.

The ancient Euclidean Algorithm was a clever  idea for recursively reducing  the gcd of  ``large'' arguments to the gcd of  ``smaller'' arguments.  The notion of \emph{degree-reducing} solutions to LDEs over polynomial rings (Definition~\ref{defn:CausalComplement}) captures the size-reducing aspect of the EEA   without  its misleading focus on gcds.     Moreover,  the degree-reducing notion  applies to factorizations of arbitrary FIR PR filter banks whereas the  \emph{polyphase order-increasing} property employed in~\cite{Bris:10:GLS-I,Bris:10b:GLS-II} was found to be useful only for linear phase filter banks.
A degree-{reducing} decomposition  corresponds to a degree-{increasing}  synthesis, so we use the  neutral  term \emph{degree-lifting} to encompass both decomposition and synthesis.
The author holds that  this {degree-lifting} character of both the EEA and the CCA is what distinguishes \emph{lifting} factorizations within the much bigger universe of elementary matrix decompositions, a distinction not made in~\cite{DaubSwel98}.

\subsection{Overview of the Paper}\label{sec:Intro:Overview}
Section~\ref{sec:Study} introduces LGT(5,3), the  LeGall-Tabatabai 5-tap/3-tap piecewise-linear spline wavelet filter bank~\cite{LeGallTabatabai:88:Subband-coding-digital,ISO_15444_1}, and uses it to illustrate a connection between causality and uniqueness of lifting factorizations.
It is shown that the \hyperref[sec:Study:CCA]{Causal Complementation Algorithm} (CCA)  generates all  of the causal lifting factorizations formed by the  \hyperref[sec:Study:EEA]{causal  version of the EEA}  plus \hyperref[sec:Study:Other]{other factorizations not generated by it},  generalizing the degree-lifting aspects of the EEA.

To understand the connection between causality and uniqueness of  factorizations, Section~\ref{sec:LDE}  defines  linear Diophantine equations (LDEs) and  reviews the   ring theory needed to characterize the solution sets of \hyperref[thm:Homog_LDE]{homogeneous LDEs} and  inhomogeneous LDEs (the \hyperref[cor:AbsLiftingTheorem]{Abstract Lifting Theorem}).    Necessary and sufficient conditions are given for existence (but not uniqueness) of solutions to inhomogeneous LDEs in principal ideal domains (the \hyperref[thm:Inhomog_LDE]{Abstract Bezout Theorem}).  Section~\ref{sec:LDE:Euclidean} reviews  the differences between  polynomials and Laurent polynomials. Readers not interested in these algebraic underpinnings should skip ahead to Theorem~\ref{thm:LDDRT}.

\hyperref[defn:CausalComplement]{Degree-reducing causal complements}  are generalized in Definition~\ref{defn:SizeReducing} to include   LDEs over the Laurent polynomials.  By Lemma~\ref{lem:LDE_uniqueness} \emph{degree-reducing} solutions to \emph{polynomial} LDEs are unique thanks to a  \hyperref[deg_add_bound]{max-additive inequality} that \emph{fails}  for the Laurent order, explaining why  filters can have multiple Laurent-order-reducing complements.   Existence and uniqueness of degree-reducing solutions (in either unknown) to  polynomial LDEs   are  given by the \hyperref[thm:LDDRT]{Linear Diophantine Degree-Reduction Theorem} (LDDRT, Theorem~\ref{thm:LDDRT}), along with necessary and sufficient conditions  for both solutions to agree. Although quite elementary, the LDDRT  seems to have escaped attention in the literature, perhaps for lack of a motive for finding degree-reducing solutions.

Section~\ref{sec:LDE:Division} uses the LDDRT to prove  existence and uniqueness of quotients whose remainders satisfy degree-reducing inequalities and  user-specified divisibility constraints   (the \hyperref[cor:GDT]{Generalized Polynomial Division Theorem}, which also appears to have escaped previous notice).  This is specialized to the case of  remainders divisible by monomials and given a constructive proof in the  \hyperref[thm:SGDT]{Slightly Generalized Division Theorem},  yielding a  \hyperref[alg:SGDA]{Slightly Generalized Division Algorithm} (SGDA).

Section~\ref{sec:CubicBSpline} returns to the  filter bank setting and presents a case study using  CDF(7,5), a 7-tap/5-tap cubic B-spline wavelet filter bank~\cite{DaubSwel98}.
It is shown that the \hyperref[sec:CubicBSpline:EEA_Col1]{causal EEA} and   \hyperref[sec:CubicBSpline:CCA_Col1]{CCA factorizations} in column~1 using classical polynomial division are the same.  To produce a \emph{causal} analogue of a unimodular linear phase lifting factorization obtained by Daubechies and Sweldens~\cite[\S7.8]{DaubSwel98},
we  use the SGDA to factor off a diagonal delay matrix with the first CCA lifting step. \hyperref[sec:CubicBSpline:SGDA]{This factorization} is \emph{not} produced by running the causal EEA  in {any} row or column of  CDF(7,5).

Section~\ref{sec:Conclusions} summarizes the paper's contributions. 

%
\section{Case Study: The LeGall-Tabatabai Filter Bank}\label{sec:Study}
We begin by contrasting the  EEA and CCA approaches to factoring LGT(5,3), the 5-tap/3-tap LeGall-Tabatabai biorthogonal linear phase filter bank~\cite{LeGallTabatabai:88:Subband-coding-digital},
whose analog synthesis scaling function and mother wavelet generate  piecewise-linear B-splines.
JPEG~2000 Part~1 \cite{ISO_15444_1} specifies LGT(5,3)  via a lifting factorization of the
unimodular \emph{polyphase-with-advance} (PWA)  matrix representation $\mathbf{A}(z)$~\cite{DaubSwel98,BrisWohl06} for
its noncausal  whole-sample symmetric analysis filters. This lifting factorization is its unique WS lifting factorization in the  WS group of unimodular whole-sample symmetric transfer matrices~\cite[Definition~8]{Bris:10:GLS-I}. The factorization halves the number of  multiplications per unit input when implementing the filter bank, a big motivation for using lifting decompositions.
\begin{equation}\begin{split}
\label{LGT53_noncausal_filters}
A_0(z) &= -z^2/8 + z/4 + 3/4 + z^{-1}/4 - z^{-2}/8\\
A_1(z) &= -z^2/2 + z -1/2
\end{split}\end{equation}
\begin{align}
\mathbf{A}(z) 
&\eqdef
\begin{bmatrix}
\sst (-z + 6 - z^{-1})/8 & \sst \;(1 + z^{-1})/4 \\
\sst -(z + 1)/2	& \sst 1  \vspace{-1pt}	
\end{bmatrix}
=
\begin{bmatrix}
\sst 1 &\sst (1 + z^{-1})/4\\
\sst 0 &\sst 1 \vspace{-1pt}	
\end{bmatrix}
\hspace{-4pt}
\begin{bmatrix}
\sst 1 &\sst 0\\
\sst -(z+1)/2 &\sst 1 \vspace{-1pt}	
\end{bmatrix}\label{LGT_GLF}
\end{align}

We will work instead with the causally normalized filters
\begin{equation}\begin{split}
\label{LGT53_causal_filters}
H_0(z) &= (-1 + 2z^{-1} + 6z^{-2} + 2z^{-3} - z^{-4})/8,\\
H_1(z) &= (-1 + 2z^{-1} -z^{-2})/2
\end{split}\end{equation}
and their causal \emph{polyphase-with-delay} (PWD) analysis matrix  \cite{Vaid93,StrNgu96}, 
\begin{align}\label{LGT}
\mathbf{H}(z) 
&=
\begin{bmatrix}
 (-1 + 6z^{-1} - z^{-2})/8 &  \;(1 + z^{-1})/4 \\
 -(1 + z^{-1})/2	&  1 \vspace{-1pt}	
\end{bmatrix} ,
\quad  |\mathbf{H}(z)|  = z^{-1}.
\end{align}

Laurent polynomials can have multiple reduced-Laurent-order unimodular complements (shorter filters with which they form a unimodular filter bank). For instance, $A_0(z)$ also has the reduced-order \emph{nonlinear phase} unimodular complement  $A'_1(z) = -7/2 - z^{-1} + z^{-2}/2$, which happens to be causal.  $A_0(z)$ and $A'_1(z)$ have noncausal unimodular PWA matrix $\mathbf{A}'(z)$, 
\begin{align*}
\mathbf{A}'(z) 
&=
\begin{bmatrix}
 (-z + 6 - z^{-1})/8 &  \;(1 + z^{-1})/4 \\
 (-7 + z^{-1})/2	&  -z^{-1}  \vspace{-1pt}	
\end{bmatrix},
\end{align*}
while $H_0(z)$ and $H'_1(z)\eqdef A'_1(z)$ have causal PWD counterpart $\mathbf{H}'(z)$,
\begin{align*}
\mathbf{H}'(z) 
&=
\begin{bmatrix}
 (-1 + 6z^{-1} - z^{-2})/8 &  \;(1 + z^{-1})/4 \\
 (-7 + z^{-1})/2	&  -1 \vspace{-1pt}	
\end{bmatrix} .
\end{align*}
The  determinants of  $\mathbf{A}(z)$ and $\mathbf{A}'(z)$ are both~1, but   $|\mathbf{H}(z)|  = z^{-1}$ and $|\mathbf{H}'(z)| = 1$ distinguish between the two reduced-degree \emph{causal} complements of  $H_0(z)$.   
Theorem~\ref{thm:LDDRT} will imply that  $H_1(z)$ and $H'_1(z)$ are the \emph{unique} reduced-degree causal complements to $H_0(z)$  for these  determinants.  
This shows that the unimodular normalization employed by Daubechies and Sweldens~\cite{DaubSwel98} is discarding useful information about  filter banks.
Using Theorem~\ref{thm:LDDRT} and the connection between lifting factorization and degree-lifting causal complementation,  the  Diophantine perspective uncovers a connection between \emph{causality} and \emph{uniqueness} of lifting factorizations that allows the CCA to enumerate and generate  \emph{all} causal degree-lifting factorizations of a causal filter bank.

\subsection{Causal Factorization via the Causal Extended Euclidean Algorithm}\label{sec:Study:EEA}
There are four possible  factorizations based on running the causal EEA in either row or  column of $\mathbf{H}(z)$.  Our notation for the EEA is a compromise between several sources, including~\cite{DaubSwel98}, \cite{BachShallit:96:Algo-Number-Theory}, \cite{Shoup:05:Comp-Number-Theory}, and~\cite{GathenGerhard:13:Modern-Computer-Algebra}.

\subsubsection{EEA in Column~0}\label{sec:Study:EEA:Col0}
Initialize  remainders
$r_0 \eqdef H_{00}(z) = (-1+6z^{-1}-z^{-2})/8$ and $r_1 \eqdef H_{10}(z) = -(1+z^{-1})/2$.
Iterate using the  polynomial division algorithm, $r_0 = q_0 r_1 + r_2$, 
\begin{align}\label{r2}
q_0 = (-7+z^{-1})/4,\;  r_2 = -1,\text{ and } \deg(r_2)=0<\deg(r_1)=1.
\end{align}
Define the matrix
\begin{align}\label{M0}
\mathbf{M}_0 &\eqdef
\begin{bmatrix}
q_0 & 1\\
1 & 0
\end{bmatrix},\text{ so that }
\begin{pmatrix}
r_0\\
r_1
\end{pmatrix}
=
\mathbf{M}_0
\begin{pmatrix}
r_1\\
r_2
\end{pmatrix}.
\end{align}
Remainder $r_2$ is invertible so the next division step yields
\begin{align}
r_3 = r_1 - q_1 r_2 = 0\mbox{ with }q_1 = r_1 / r_2 = (1 + z^{-1})/2,\label{r3}
\end{align}
where $\deg(r_3)=\deg(0)\eqdef -\infty < \deg(r_2)$. 
Define
\begin{align}
\mathbf{M}_1 \eqdef
\begin{bmatrix}
q_1 & 1\\
1 & 0
\end{bmatrix},\text{ so that }
\begin{pmatrix}
r_1\\
r_2
\end{pmatrix}
=
\mathbf{M}_1
\begin{pmatrix}
r_2\\
r_3
\end{pmatrix}.\label{M1}
\end{align}

Iteration terminates since $r_3=0$, and \eqref{M0} and~\eqref{M1} imply
\begin{align}\label{M0M1}
\begin{pmatrix}
r_0\\
r_1
\end{pmatrix}
=
\mathbf{M}_0\mathbf{M}_1
\begin{pmatrix}
r_2\\
0
\end{pmatrix}.
\end{align}
Put two swap matrices $\mathbf{J}$~\eqref{swap_matrix} between $\mathbf{M}_0$ and $\mathbf{M}_1$ to get lifting matrices $\mathbf{S}_0$ and $\mathbf{S}_1$,
\begin{align*}
\mathbf{M}_0\mathbf{M}_1
&=
(\mathbf{M}_0\mathbf{J})(\mathbf{J}\mathbf{M}_1)
=
\begin{bmatrix}
1 & q_0\\
0 & 1
\end{bmatrix}
\hspace{-3pt}
\begin{bmatrix}
1 & 0\\
q_1 & 1
\end{bmatrix}
=
\mathbf{S}_0\mathbf{S}_1.
\end{align*}
As in the unimodular approach~\cite{DaubSwel98}, form an \emph{augmentation matrix} $\mathbf{H}'(z)$ by augmenting~\eqref{M0M1} with  causal filters $a_0$ and $a_1$ determined by the condition
\begin{align}
\mathbf{H}'(z)&\eqdef
\begin{bmatrix}
r_0 & a_0\\
r_1 & a_1\vspace{1pt}  
\end{bmatrix}
=
\mathbf{S}_0\mathbf{S}_1
\begin{bmatrix}
r_2 & 0\\
0 & |\mathbf{H}(z)|/r_2
\end{bmatrix}\label{S1S2}\nonumber\\
&=
\begin{bmatrix}
1 & (-7+z^{-1})/4\\
0 & 1 \vspace{-1pt}	
\end{bmatrix}
\hspace{-3pt}
\begin{bmatrix}
1 & 0\\
(1 + z^{-1})/2 & 1 \vspace{-1pt}	
\end{bmatrix}
\hspace{-3pt}
\begin{bmatrix}
-1 & 0\\
0 & -z^{-1} \vspace{-1pt}	
\end{bmatrix}\\
&=
\begin{bmatrix}
 (-1 + 6z^{-1} - z^{-2})/8 &  \;(7z^{-1} - z^{-2})/4 \\
 -(1 + z^{-1})/2	&  -z^{-1} \vspace{-1pt}	
\end{bmatrix} .\nonumber
\end{align}
$|\mathbf{H}'(z)|=|\mathbf{H}(z)|$ by~\eqref{S1S2} and the matrices agree in column~0, so the  Lifting Theorem~\cite{Herley93,VetterliHerley:92:Wavelets-filter-banks,Sweldens96} (Corollary~\ref{cor:AbsLiftingTheorem} below) says that $\mathbf{H}(z)$ can be lifted from $\mathbf{H}'(z)$ by a causal  lifting update to column~1, $\mathbf{H}(z)=\mathbf{H'}(z)\mathbf{S}(z)$, 
%
\begin{align}\label{S1S2LiftingThm}
\mathbf{H}(z)
&=
\begin{bmatrix}
H_{00} & H_{01}\\
H_{10} & H_{11} \vspace{-1pt}
\end{bmatrix}
=
\begin{bmatrix}
r_0 & a_0\\
r_1 & a_1\vspace{1pt}  
\end{bmatrix}
\hspace{-3pt}
\begin{bmatrix}
1 & S\\
0 & 1 \vspace{-1pt}
\end{bmatrix}
\text{\ iff\ }
\left\{\begin{array}{l}
H_{01} = r_0S + a_0 \\
H_{11} = r_1S + a_1\, . 
\end{array}\right.
\end{align}
Compute $H_{01} - a_0 = (1 - 6z^{-1} + z^{-2})/4 = -2r_0$, so $S = -2$.
The resulting factorization in standard  causal lifting form~\eqref{std_causal_form} based on~\eqref{S1S2}--\eqref{S1S2LiftingThm} is
\begin{align}\label{LGT_EEA_col0}
\mathbf{H}(z) 
&= -
\begin{bmatrix}
 1 &  (-7 + z^{-1})/4\\
 0 &  1%
\end{bmatrix}
\hspace{-4pt}
\begin{bmatrix}
 1 &  0\\
 (1+z^{-1})/2 &  1%
\end{bmatrix}
\hspace{-4pt}
\begin{bmatrix}
 1 &  0\\
 0 &  z^{-1}%
\end{bmatrix}
\hspace{-4pt}
\begin{bmatrix}
 1 &  -2\\
 0 &  1%
\end{bmatrix} .
\end{align}

\subsubsection{EEA in Column~1}\label{sec:Study:EEA:Col1}
Initialize  
$r_0 \eqdef H_{01}(z) = (1+z^{-1})/4$ and $r_1 \eqdef H_{11}(z) = 1$.  Polynomial  division yields
$q_0 = r_0/r_1 = (1+z^{-1})/4$, and  $r_2=0$  so
\begin{align}
\begin{pmatrix}
r_0\\
r_1
\end{pmatrix}
=
\mathbf{M}_0
\begin{pmatrix}
r_1\\
0
\end{pmatrix} \text{ with }
\mathbf{M}_0 \eqdef
\begin{bmatrix}
q_0 & 1\\
1 & 0
\end{bmatrix},\; |\mathbf{M}_0|=-1.  \label{M0_col1}
\end{align}
Augment~\eqref{M0_col1} in column~0 with causal filters $a_0$ and $a_1$ defined by
\begin{align}\label{G_LGT_col1}
\mathbf{H}'(z)
&\eqdef
\begin{bmatrix}
a_0 & r_0\\
a_1 & r_1
\end{bmatrix}
=
\mathbf{M}_0
\begin{bmatrix}
0 & r_1\\
|\mathbf{H}|/r_1 & 0
\end{bmatrix}
=
\mathbf{M}_0\,\mathbf{J}^2
\begin{bmatrix}
0 & r_1\\
|\mathbf{H}|/r_1 & 0
\end{bmatrix}\\
&=
\begin{bmatrix}
1\; & (1+z^{-1})/4 \\
0 & 1 \vspace{-1pt}	
\end{bmatrix}
\hspace{-3pt}
\begin{bmatrix}
z^{-1} & 0\\
0 & 1
\end{bmatrix}.\nonumber
\end{align}
$|\mathbf{H}'(z)|=|\mathbf{H}(z)|$ and the matrices agree in column~1 so $\mathbf{H}(z)$ can be lifted from  $\mathbf{H}'(z)$ by a causal lifting update to column~0,
\begin{align}\label{G_LGT_col1_LiftingThm}
\mathbf{H}(z) 
&=
\begin{bmatrix}
H_{00} & H_{01}\\
H_{10} & H_{11} \vspace{-1pt}
\end{bmatrix}
=
\begin{bmatrix}
a_0 & r_0 \\
a_1 & r_1 \vspace{1pt}  
\end{bmatrix}
\hspace{-3pt}
\begin{bmatrix}
1 & 0\\
S & 1 \vspace{-1pt}
\end{bmatrix}
\text{\ iff\ }
\left\{\begin{array}{l}
H_{00} = r_0S + a_0 \\
H_{10} = r_1S + a_1 \,.
\end{array}\right.
\end{align}
$H_{00} - a_0 = -(1 + 2z^{-1} + z^{-2})/8 = r_0S$ for $S(z) = -(1+z^{-1})/2$ so by~\eqref{G_LGT_col1_LiftingThm}   
\begin{align}\label{LGT_EEA_col1}
\mathbf{H}(z) &=
\begin{bmatrix}
1 & (1+ z^{-1})/4\\
0 & 1\vspace{-1pt}
\end{bmatrix}
\hspace{-3pt}
\begin{bmatrix}
z^{-1} & 0\\
0 & 1\vspace{-1pt}
\end{bmatrix}
\hspace{-3pt}
\begin{bmatrix}
1 & 0\\
 -(1 + z^{-1})/2	&  1 \vspace{-1pt}
\end{bmatrix}.
\end{align}
\rem  This is a \emph{causal} analogue of the noncausal linear phase lifting~\eqref{LGT_GLF} for the unimodular LGT(5,3) analysis bank; its causal linear phase lifting filters differ from the corresponding lifting filters in~\eqref{LGT_GLF} by at most  delays.

\subsubsection{EEA in Row~0}\label{sec:Study:EEA:Row0}
Initialize  $r_0 \eqdef H_{00}(z)$ 
and 
$r_1 \eqdef H_{01}(z)$. 
Division yields $q_0 = (7 - z^{-1})/2$ and $r_2 = r_0 - r_1 q_0 = -1$,
\begin{align}
(r_0,\, r_1) = (r_1,\, r_2)\,\mathbf{M}_0, \text{ with }
\mathbf{M}_0 
\eqdef
\begin{bmatrix}
q_0 & 1\\
1 & 0
\end{bmatrix}.\label{M0_row0}
\end{align}
Since $r_2=-1$ is invertible, we get $q_1 = r_1 / r_2 = -(1 + z^{-1})/4$, $r_3=0$, and
\begin{align}
(r_1,\, r_2) = (r_2,\, 0)\, \mathbf{M}_1, \text{ with }
\mathbf{M}_1 \eqdef
\begin{bmatrix}
q_1 & 1\\
1 & 0
\end{bmatrix}.\label{M1_row0}
\end{align}
Combine~\eqref{M0_row0} and~\eqref{M1_row0} and define 
$\mathbf{S}_0\eqdef \mathbf{J}\mathbf{M}_0,\;\mathbf{S}_1\eqdef  \mathbf{M}_1\mathbf{J}$ to express the top row as 
$(r_0,\, r_1) = (r_2,\, 0)\,\mathbf{S}_1\mathbf{S}_0.$
Augment with  filters $a_0$ and $a_1$,
\begin{align*}
\mathbf{H}'(z) 
&\eqdef
\begin{bmatrix}
r_0 & r_1\\
a_0 & a_1
\end{bmatrix}
=
\begin{bmatrix}
r_2 & 0\\
0 & \;|\mathbf{H}|/r_2
\end{bmatrix}
\mathbf{S}_1\mathbf{S}_0
=
\begin{bmatrix}
\sst (-1 + 6z^{-1} - z^{-2})/8 &\sst   \;(1 + z^{-1})/4 \\
\sst  (-7z^{-1} + z^{-2})/2	&\sst  - z^{-1} \vspace{-1pt}	
\end{bmatrix}. 
\end{align*}
$\mathbf{H}$ can be lifted from augmentation matrix $\mathbf{H}'$ by updating row~1, $\mathbf{H}=\mathbf{S}\mathbf{H}'$,
\begin{align*}
\begin{bmatrix}
H_{00} & H_{01}\\
H_{10} & H_{11} \vspace{-1pt}
\end{bmatrix}
&=
\begin{bmatrix}
1 & 0\\
S & 1 \vspace{-1pt}
\end{bmatrix}
\hspace{-3pt}
\begin{bmatrix}
r_0 & r_1 \\
a_0 & a_1 \vspace{1pt}
\end{bmatrix}
\text{\ iff\ }
\left\{\begin{array}{l}
H_{10} = r_0S + a_0 \\
H_{11} = r_1S + a_1. \vspace{-1pt}
\end{array}\right.
\end{align*}
This implies $S(z) = 4$, and the causal lifting factorization is 
\begin{align}\label{LGT_EEA_row0}
\mathbf{H}(z) 
&= -
\begin{bmatrix}
 1 &  0\\
 4 &  1%
\end{bmatrix}
\hspace{-4pt}
\begin{bmatrix}
 1 &  0\\
 0 &  z^{-1}%
\end{bmatrix}
\hspace{-4pt}
\begin{bmatrix}
 1 &  -(1+z^{-1})/4\\
 0 &  1%
\end{bmatrix}
\hspace{-4pt}
\begin{bmatrix}
 1 &  0\\
 (7 - z^{-1})/2 &  1%
\end{bmatrix} .
\end{align}

\subsubsection{EEA in Row~1}\label{sec:Study:EEA:Row1}  
This option also yields  the causal linear phase lifting factorization~\eqref{LGT_EEA_col1}.

\subsection{Causal Factorization via the Causal Complementation Algorithm}\label{sec:Study:CCA}
 We begin by using the CCA to reproduce all of the EEA factorizations.

\subsubsection{CCA With Division in Column~0}\label{sec:Study:CCA:Col0} 
Initialize   quotient matrix $\mathbf{Q}_0(z)\eqdef \mathbf{H}(z)$ and  find an initial lifting   of the form
\begin{equation}\label{LGT_left_update}
\mathbf{Q}_0(z) =
\begin{bmatrix}
\sst (-1 + 6z^{-1} - z^{-2})/8 &\sst   \;(1 + z^{-1})/4 \\
\sst  -(1 + z^{-1})/2	&\sst   1 \vspace{-1pt}	
\end{bmatrix}
=
\begin{bmatrix}
\sst E_0 &\sst  E_1\\
\sst F_0 &\sst  F_1\vspace{-1pt}
\end{bmatrix}
=
\begin{bmatrix}
\sst 1 &\sst  S\\
\sst 0 &\sst  1\vspace{-1pt}%
\end{bmatrix}
\hspace{-4pt}
\begin{bmatrix}
\sst R_0 &\sst  R_1\\
\sst F_0 &\sst  F_1\vspace{-1pt}%
\end{bmatrix}.
\end{equation}
Set  $E_0\leftarrow H_{00}$, $F_0\leftarrow H_{10}$,   
and divide $F_0$ into $E_0$ to get $E_0 = F_0S + R_0$  with
\begin{equation}\label{LGT_col0_step0_div}
S(z) = (-7 + z^{-1})/4\text{ and } R_0(z) = -1,\text{ as  in the EEA~\eqref{r2}.}
\end{equation}
Set $R_1 \leftarrow E_1 - F_1S = 2$  to get a row reduction,  $(R_0,R_1) = (E_0,E_1) - S(F_0,F_1)$.

$R_0$ and $R_1$ are coprime,  so the first lifting step~\eqref{LGT_left_update} is
\begin{align}\label{LGT_matrices_col0_step0}
\mathbf{Q}_0(z) 
&=
\begin{bmatrix}
1 & (-7 + z^{-1})/4\\
0 & 1\vspace{-1pt}%
\end{bmatrix}
\hspace{-3pt}
\begin{bmatrix}
-1 & 2\\
-(1+z^{-1})/2 & \,1\vspace{-1pt}%
\end{bmatrix} 
=
\mathbf{V}_0(z)\mathbf{Q}_1(z).
\end{align}
Reset  $E_j\leftarrow F_j$ and $F_j\leftarrow R_j$ in $\mathbf{Q}_1(z)$ and divide again in column~0 to get 
\begin{align}\label{LGT_col0_step1_form}
\mathbf{Q}_1(z) 
=
\begin{bmatrix}
-1 & 2\\
-(1+z^{-1})/2 & 1\vspace{-1pt}%
\end{bmatrix}
=
\begin{bmatrix}
F_0 & F_1\\
E_0 & E_1\vspace{-1pt}
\end{bmatrix}
=
\begin{bmatrix}
1 & 0\\
S & 1\vspace{-2pt}%
\end{bmatrix}
\hspace{-3pt}
\begin{bmatrix}
F_0 & F_1\\
R_0 & R_1\vspace{-1pt}%
\end{bmatrix}.
\end{align}
Dividing  $F_0=-1$ into  $E_0=-(1+z^{-1})/2$ as in~\eqref{r3}  gives $E_0 = F_0S + R_0$, 
\begin{equation}\label{LGT_col0_step1_div}
S(z) = (1+z^{-1})/2\text{ and } R_0 =0,\; \deg(R_0) \eqdef -\infty < \deg(F_0). 
\end{equation}
Set  $R_1 \leftarrow E_1 - F_1S = -z^{-1}$; $\gcd(R_0,R_1)=z^{-1}$ so factor it out of $\mathbf{Q}_2(z)$,
\begin{align}
\mathbf{Q}_1(z) 
&=
\begin{bmatrix}
\sst 1 &\sst  0\\
\sst (1+z^{-1})/2 &\sst  \,1\vspace{-1pt}%
\end{bmatrix}
\hspace{-4pt}
\begin{bmatrix}
\sst -1 &\sst  2\\
\sst 0 &\sst  -z^{-1}\vspace{-1pt}
\end{bmatrix}
=
\begin{bmatrix}
\sst 1 &\sst  0\\
\sst (1+z^{-1})/2 &\sst  \,1\vspace{-1pt}%
\end{bmatrix}
\hspace{-4pt}
\begin{bmatrix}
\sst 1 &\sst  0\\
\sst 0 &\sst  z^{-1}\vspace{-1pt}
\end{bmatrix}
\hspace{-4pt}
\begin{bmatrix}
\sst -1 &\sst  2\\
\sst 0 &\sst  -1\vspace{-1pt}
\end{bmatrix}\label{LGT_col0_step2} \\
&=
\mathbf{V}_1(z)\bsy{\Delta}_1(z)\mathbf{Q}_2(z),\text{ where }\bsy{\Delta}_1(z)=\diag(1, z^{-1}). \nonumber
\end{align}
Combining~\eqref{LGT_matrices_col0_step0} and~\eqref{LGT_col0_step2} and dividing $\mathbf{Q}_2(z)$ by $-1$ to get a proper lifting step yields the same factorization~\eqref{LGT_EEA_col0} obtained using the EEA in column~0,
%
\begin{align}\label{LGT_col0_factorization} 
\mathbf{H}(z) 
&= -
\begin{bmatrix}
 1 &  (-7 + z^{-1})/4\\
 0 &  1%
\end{bmatrix}
\hspace{-4pt}
\begin{bmatrix}
 1 &  0\\
 (1+z^{-1})/2 &  1%
\end{bmatrix}
\hspace{-4pt}
\begin{bmatrix}
 1 &  0\\
 0 &  z^{-1}%
\end{bmatrix}
\hspace{-4pt}
\begin{bmatrix}
 1 &  -2\\
 0 &  1%
\end{bmatrix} .
\end{align}

\subsubsection{CCA With Division in Column~1}\label{sec:Study:CCA:Col1} 
Initialize  $E_1\leftarrow H_{01}$ and $F_1\leftarrow H_{11}$. Divide $F_1$ into $E_1$ in~\eqref{LGT_left_update} to get
$E_1 = F_1S + R_1$ where $S =  (1 + z^{-1})/4$ and $R_1 = 0.$  Set $R_0 \leftarrow E_0 - F_0S = z^{-1}$;  factoring out $\gcd(R_0,R_1)=z^{-1}$ yields the  lifting~\eqref{LGT_EEA_col1} obtained by the EEA,
\begin{align}\label{LGT_col1_factorization}
\mathbf{H}(z) =
\begin{bmatrix}
1 & (1 + z^{-1})/4\\
0 & 1\vspace{-1pt}%
\end{bmatrix}
\hspace{-3pt}
\begin{bmatrix}
z^{-1} & 0\\
0 & 1\vspace{-1pt}%
\end{bmatrix}
\hspace{-3pt}
\begin{bmatrix}
1 & 0\\
-(1+z^{-1})/2 & \,1\vspace{-1pt}%
\end{bmatrix} .
\end{align}

\subsubsection{CCA With Division in Row~0}\label{sec:Study:CCA:Row0} 
Initialize  $E_0\leftarrow H_{00}$  and $F_0\leftarrow H_{01}$ and divide $F_0$ into $E_0$ to get
\begin{align}
&\mathbf{Q}_0(z) =
\begin{bmatrix}
E_0 & F_0\\
E_1 & F_1\vspace{-1pt}
\end{bmatrix}
=
\begin{bmatrix}
R_0 & F_0\\ 
R_1 & F_1\vspace{-1pt}%
\end{bmatrix}
\hspace{-3pt}
\begin{bmatrix}
1 & 0\\
S & 1\vspace{-1pt}%
\end{bmatrix},
\;S = (7 - z^{-1})/2,\;R_0 = -1.\label{row0_step0}
\end{align}
Set $R_1 \leftarrow E_1 - F_1S = -4$.  The first factorization step is 
\begin{align}\label{LGT_matrices_row0_step0}
\mathbf{Q}_0(z) 
&= \mathbf{Q}_1(z)\mathbf{V}_0(z)
=
\begin{bmatrix}
-1 & (1+z^{-1})/4\\
-4 & \,1\vspace{-1pt}%
\end{bmatrix} 
\hspace{-3pt}
\begin{bmatrix}
1 & 0\\
(7 - z^{-1})/2 & \,1\vspace{-1pt}%
\end{bmatrix}.
\end{align}
Reset the labels $E_j\leftarrow F_j$ and $F_j\leftarrow R_j$ in $\mathbf{Q}_1(z)$, so that
\begin{align}\label{LGT_matrices_row0_step1}
\mathbf{Q}_1(z) =
\begin{bmatrix}
-1 & (1+z^{-1})/4\\
-4 & \,1\vspace{-1pt}%
\end{bmatrix} 
=
\begin{bmatrix}
F_0 & E_0\\
F_1 & E_1\vspace{-1pt}
\end{bmatrix}.
\end{align}

Divide  $F_0=-1$ into  $E_0=(1+z^{-1})/4$,
\begin{align}
E_0 &= F_0S + R_0,\; S = -(1+z^{-1})/4,\text{ and }R_0=0.\label{row0_step1}
\end{align}
The first two division steps~\eqref{row0_step0} and \eqref{row0_step1} are identical to the first two steps in the row~0 EEA calculation, \eqref{M0_row0} and \eqref{M1_row0}.   Set  $R_1 \leftarrow E_1 - F_1S = -z^{-1}$,  
\begin{align*}
\mathbf{Q}_1(z) 
=
\begin{bmatrix}
F_0 & R_0\\
F_1 & R_1\vspace{-1pt}
\end{bmatrix}
\hspace{-3pt}
\begin{bmatrix}
1 & S\\
0 & 1\vspace{-1pt}%
\end{bmatrix}
=
\begin{bmatrix}
-1 & 0\\
-4 & -z^{-1}\vspace{-1pt}%
\end{bmatrix} 
\hspace{-3pt}
\begin{bmatrix}
1 & \,-(1+z^{-1})/4\\
0 & 1\vspace{-1pt}%
\end{bmatrix}.
\end{align*}
Factor out $\gcd(R_0,R_1)=z^{-1}$; the resulting    factorization agrees with~\eqref{LGT_EEA_row0},
\begin{align}\label{LGT_row0_factorization}
\mathbf{H}(z) 
&= -
\begin{bmatrix}
 1 &  0\\
 4 &  1
\end{bmatrix} 
\hspace{-4pt}
\begin{bmatrix}
 1 &  0\\
 0 &  z^{-1}
\end{bmatrix} 
\hspace{-4pt}
\begin{bmatrix}
 1 &  -(1+z^{-1})/4\\
 0 &  1
\end{bmatrix}
\hspace{-4pt}
\begin{bmatrix}
 1 &  0\\
 (7 - z^{-1})/2 &  1
\end{bmatrix}.
\end{align}

\subsubsection{CCA With Division in Row~1}\label{sec:Study:CCA:Row1} 
 The  factorization is identical to the linear phase factorization~\eqref{LGT_EEA_col1},  \eqref{LGT_col1_factorization}.

\subsection{Other Degree-Lifting Factorizations via the CCA}\label{sec:Study:Other} 
It is convenient to have two  algebraic tools for manipulating lifting cascades.
\begin{definition}[cf.~\cite{Bris:10:GLS-I}, eq.~(27)]
\label{defn:DIF}
Let  $\mathbf{D}_{\kappa_0,\kappa_1}\eqdef\diag(\kappa_0,\kappa_1)$; $\kappa_0,\kappa_1\neq 0$. 
The  \emph{diagonal inner automorphism} 
$\gamma_{\kappa_0,\kappa_1}$   is the matrix automorphism $\gamma_{\kappa_0,\kappa_1}\mathbf{A} \eqdef \mathbf{D}_{\kappa_0,\kappa_1}\mathbf{A}\,\mathbf{D}^{-1}_{\kappa_0,\kappa_1}$, which satisfies
\begin{align}
\gamma_{\kappa_0,\kappa_1}\!
\begin{bmatrix}
a & b \\
c & d\vspace{-2pt} 
\end{bmatrix} 
&\eqdef  
\begin{bmatrix}
\kappa_0 & 0\\
0 & \kappa_1%
\end{bmatrix}
\hspace{-3pt}
\begin{bmatrix}
a & b \\
c & d\vspace{-2pt} 
\end{bmatrix} 
\hspace{-3pt}
\begin{bmatrix}
\kappa_0^{-1\rule{0pt}{6pt}} & 0\\
0 & \kappa_1^{-1}%
\end{bmatrix}
=
\begin{bmatrix}
a &  \kappa_0 \kappa_1^{-1\rule{0pt}{6pt}} b \\
\kappa_0^{-1} \kappa_1 c &  d%
\end{bmatrix} ,\label{DIF_automorphism}\\
\mathbf{D}_{\kappa_0,\kappa_1}\mathbf{A} 
&= (\gamma_{\kappa_0,\kappa_1}\mathbf{A})\mathbf{D}_{\kappa_0,\kappa_1}\mbox{ and } 
\mathbf{A}\,\mathbf{D}_{\kappa_0,\kappa_1} 
= \mathbf{D}_{\kappa_0,\kappa_1}\,\gamma^{-1}_{\kappa_0,\kappa_1}\mathbf{A}.\label{symbolic_DIF}
\end{align}
\end{definition}

\begin{definition}\label{defn:dbl_transpose}
The \emph{double transpose automorphism}   is    $\mathbf{A}^{\!\dbldag} \eqdef \mathbf{J}\mathbf{A}\mathbf{J}$,  which satisfies
\begin{align}
\begin{bmatrix}
a & b \\
c & d\vspace{-2pt} 
\end{bmatrix}^{\dbldag}
&\eqdef  
\begin{bmatrix}
0 & 1\\
1 & 0\vspace{-2pt} 
\end{bmatrix}
\hspace{-2pt}
\begin{bmatrix}
a & b \\
c & d\vspace{-2pt} 
\end{bmatrix}
\hspace{-2pt}
\begin{bmatrix}
0 & 1\\
1 & 0\vspace{-2pt} 
\end{bmatrix}
=
\begin{bmatrix}
d & c \\
b & a\vspace{-2pt} 
\end{bmatrix},\label{dbl_transpose_automorphism}\\
\mathbf{J}\mathbf{A} &= \mathbf{A}^{\!\dbldag}\mathbf{J}\mbox{ and } 
\mathbf{A}\mathbf{J} = \mathbf{J}\mathbf{A}^{\!\dbldag}.\label{dbl_transpose}
\end{align}
\end{definition}

\subsubsection{Switching Between Rows}\label{sec:Study:Other:Rows}
We now construct CCA factorizations that are {different} from those obtained using the EEA by exploiting  options that have no obvious EEA analogues.  
Consider~\eqref{LGT_matrices_row0_step0},  
\mbox{$\mathbf{Q}_0(z) \eqdef \mathbf{H}(z) = \mathbf{Q}_1(z)\mathbf{V}_0(z)$},
obtained by  dividing in row~0.  
In Section~\ref{sec:Study:CCA:Row0} we  factored $\mathbf{Q}_1(z)$ by dividing $F_0=-1$ into  $E_0=(1+z^{-1})/4$ in~\eqref{LGT_matrices_row0_step1}.  We are not obliged to continue dividing in row~0, however. Since
\[  \deg|\mathbf{Q}_1(z)| = 1 > \deg(F_0) + \deg(F_1) = 0,  \]  
Theorem~\ref{thm:LDDRT} (below) implies that  division in row~1 will produce a \emph{different} lifting step than~\eqref{row0_step1}. Therefore, divide $F_1$ into $E_1$ in $\mathbf{Q}_1(z)$ to get $E_1 = F_1S + R_1$, where $S = -1/4$ and $R_1 = 0$.
Set $R_0 \leftarrow E_0 - F_0S = z^{-1}/4$ and factor $\gcd(R_0,R_1)=z^{-1}$ out of column~1 of the quotient matrix,
\begin{align*}
\mathbf{Q}_1(z) 
&=
\begin{bmatrix}
F_0 & R_0\\
F_1 & R_1\vspace{-1pt}
\end{bmatrix}
\hspace{-3pt}
\begin{bmatrix}
1 & S\\
0 & 1\vspace{-1pt}%
\end{bmatrix}
=
\begin{bmatrix}
-1 & z^{-1}/4\\
-4 & 0\vspace{-1pt}%
\end{bmatrix} 
\hspace{-3pt}
\begin{bmatrix}
1 & \,-1/4\\
0 & 1\vspace{-1pt}%
\end{bmatrix}
=
\begin{bmatrix}
-1 & 1/4\\
-4 & 0\vspace{-1pt}%
\end{bmatrix} 
\hspace{-3pt}
\begin{bmatrix}
1 & 0\\
0 & z^{-1}\vspace{-1pt}%
\end{bmatrix}
\hspace{-3pt}
\begin{bmatrix}
1 & \,-1/4\\
0 & 1\vspace{-1pt}%
\end{bmatrix}\\
&=
\mathbf{Q}_2(z)\bsy{\Delta}_1(z)\mathbf{V}_1(z),\text{ where }
\bsy{\Delta}_1(z) = \diag(1, z^{-1}).
\end{align*}
$\mathbf{Q}_2(z)$  factors into a diagonal gain matrix, a  lifting step, and a swap matrix,
\begin{align*}
\mathbf{Q}_2(z) 
&=
\begin{bmatrix}
-1 & 1/4\\
-4 & 0\vspace{-1pt}%
\end{bmatrix} 
=
\begin{bmatrix}
1/4 & 0\\
0 & -4\vspace{-1pt}%
\end{bmatrix}
\hspace{-3pt}
\begin{bmatrix}
1 & -4\\
0 & 1\vspace{-1pt}%
\end{bmatrix}
\hspace{-3pt}
\begin{bmatrix}
0 & 1\\
1 & 0
\end{bmatrix} 
=
\diag(1/4,-4)\mathbf{V}_2(z)\,\mathbf{J} .
\end{align*}
Include the other matrices to get a causal lifting factorization~\eqref{LGT_switch_between_rows} that  is different from those obtained using the EEA in Section~\ref{sec:Study:EEA}.
\begin{align}\label{LGT_switch_between_rows} 
\mathbf{H}(z) 
&= 
\diag(1/4,-4)\mathbf{V}_2(z)\,\mathbf{J}\,\bsy{\Delta}_1(z)\mathbf{V}_1(z)\mathbf{V}_0(z)\nonumber\\
&=
\diag(1/4,-4)\mathbf{V}_2(z)\bsy{\Delta}_1^{\!\dbldag}(z)\mathbf{V}_1^{\!\dbldag}(z)\mathbf{V}_0^{\!\dbldag}(z)\,\mathbf{J}\text{\ \ by~\eqref{dbl_transpose}}\nonumber\\
&=
\begin{bmatrix}
\sst 1/4 & \sst 0\\
\sst 0 & \sst -4
\end{bmatrix}
\hspace{-4pt}
\begin{bmatrix}
\sst 1 & \sst -4\\
\sst 0 & \sst 1
\end{bmatrix}
\hspace{-4pt}
\begin{bmatrix}
\sst z^{-1} & \sst 0\\
\sst 0 & \sst 1
\end{bmatrix}
\hspace{-4pt}
\begin{bmatrix}
\sst 1 & \sst 0\\
\sst -1/4 & \sst 1
\end{bmatrix}
\hspace{-4pt}
\begin{bmatrix}
\sst 1 & \sst (7 - z^{-1})/2\\
\sst 0 & \sst 1
\end{bmatrix}
\hspace{-4pt}
\begin{bmatrix}
\sst 0 & \sst 1\\
\sst 1 & \sst 0
\end{bmatrix}. 
\end{align}

\subsubsection{Switching Between Columns}\label{sec:Study:Other:Cols}
Consider~\eqref{LGT_matrices_col0_step0},
\mbox{$\mathbf{Q}_0(z) = \mathbf{V}_0(z)\mathbf{Q}_1(z)$,} 
the initial lifting step obtained by dividing in column~0.  
In~\eqref{LGT_col0_step1_form} we see that the first quotient $\mathbf{Q}_1(z)$ satisfies $\deg|\mathbf{Q}_1(z)| = 1 > \deg(F_0) + \deg(F_1) = 0$ 
so Theorem~\ref{thm:LDDRT} implies that  division  in column~1 of $\mathbf{Q}_1(z)$ yields a \emph{different} lifting step than~\eqref{LGT_col0_step1_div}.  Division in column~1 yields $S=E_1/F_1=1/2$, $R_1=0$,  and  $R_0 \leftarrow E_0 - F_0S = -z^{-1}/2$ so
\begin{align*}
\mathbf{Q}_1(z) 
&=
\begin{bmatrix}
-1 & 2\\
-(1+z^{-1})/2 & 1\vspace{-1pt}%
\end{bmatrix}
=
\begin{bmatrix}
1 & 0\\
S & 1\vspace{-1pt}
\end{bmatrix}
\hspace{-3pt}
\begin{bmatrix}
F_0 & F_1\\
R_0 & R_1\vspace{-1pt}
\end{bmatrix}
=
\begin{bmatrix}
1 & 0\\
1/2 & 1\vspace{-1pt}%
\end{bmatrix}
\hspace{-3pt}
\begin{bmatrix}
1 & 0\\
0 & z^{-1}\vspace{-1pt}%
\end{bmatrix}
\hspace{-3pt}
\begin{bmatrix}
-1 & 2\\
-1/2 & 0\vspace{-1pt}%
\end{bmatrix} \\
&=
\mathbf{V}_1(z)\bsy{\Delta}_1(z)\mathbf{Q}_2(z).
\end{align*}
$\mathbf{Q}_2(z)$ can be factored into a gain matrix, a  lifting step, and a swap matrix,
\begin{align*}
\mathbf{Q}_2(z) 
=
\begin{bmatrix}
-1 & 2\\
-1/2 & 0\vspace{-1pt}%
\end{bmatrix} 
=
\begin{bmatrix}
2 & 0\\
0 & -1/2\vspace{-1pt}%
\end{bmatrix}
\hspace{-3pt}
\begin{bmatrix}
1 & -1/2\\
0 & 1\vspace{-1pt}%
\end{bmatrix}
\hspace{-3pt}
\begin{bmatrix}
0 & 1\\
1 & 0\vspace{-1pt}%
\end{bmatrix}
=
\diag(2,-1/2)\mathbf{V}_2(z)\,\mathbf{J} .
\end{align*}
The complete lifting factorization for $\mathbf{H}(z)$, which  differs from~\eqref{LGT_switch_between_rows} and from  the EEA factorizations in Section~\ref{sec:Study:EEA}, is
\begin{align*}
\mathbf{H}(z)
&= 
\mathbf{V}_0(z)\mathbf{V}_1(z)\bsy{\Delta}_1(z)\,\diag(2,-1/2)\mathbf{V}_2(z)\,\mathbf{J}\nonumber\\
&= 
\diag(2,-1/2)\bigl(\gamma^{-1}_{\ssst 2,-1/2}\mathbf{V}_0(z)\bigr)\bigl(\gamma^{-1}_{\ssst 2,-1/2}\mathbf{V}_1(z)\bigr) \bsy{\Delta}_1(z)\,\mathbf{V}_2(z)\,\mathbf{J}\text{ by~\eqref{symbolic_DIF},}\nonumber\\
&=
\begin{bmatrix}
\sst 2 &\sst  0\\
\sst 0 &\sst  -1/2
\end{bmatrix}
\hspace{-4pt}
\begin{bmatrix}
\sst 1 &\sst  (7 - z^{-1})/16\\
\sst 0 &\sst  1
\end{bmatrix}
\hspace{-4pt}
\begin{bmatrix}
\sst 1 &\sst  0\\
\sst -2 &\sst  1
\end{bmatrix}
\hspace{-4pt}
\begin{bmatrix}
\sst 1 &\sst  0\\
\sst 0 &\sst  z^{-1}
\end{bmatrix}
\hspace{-4pt}
\begin{bmatrix}
\sst 1 &\sst  -1/2\\
\sst 0 &\sst  1
\end{bmatrix}
\hspace{-4pt}
\begin{bmatrix}
\sst 0 &\sst  1\\
\sst 1 &\sst  0
\end{bmatrix}. 
\end{align*}

\subsubsection{Extracting Diagonal Delay Matrices}\label{sec:Study:Other:Delay}
CCA schemes that mix row and column updates  are also possible, but a more significant ability  will be a ``generalized polynomial division''  technique, which we now demonstrate, for  factoring off diagonal delay matrices at \emph{arbitrary} points in the  process, rather than  waiting for  $R_0$ and $R_1$ to have a nontrivial gcd.  
Divide in column~0 of LGT(5,3) to get a factorization of the form~\eqref{LGT_left_update}, but this time  generate a remainder divisible by $z^{-1}$ when dividing $F_0$ into $E_0$.  Killing the highest-order term in $E_0$ with $S(z) \leftarrow z^{-1}/4$ leaves
\begin{align*}
R_0(z) &= E_0(z) - F_0(z)\,(z^{-1}/4) 
= (-1+7z^{-1})/8 . 
\end{align*}
Instead of  subtracting 7/4 from $S(z)$ to get $R_0=-1$ as  in~\eqref{LGT_col0_step0_div},   add 1/4  to \emph{kill} the constant term, $S(z) \leftarrow (z^{-1} + 1)/4$.  This leaves
$R_0(z) = z^{-1}$, which satisfies $\deg(R_0)  < \deg(F_0) + 1$.
The motivation, generating a remainder  divisible by $z^{-1}$,  is new but the arithmetic is comparable to how Daubechies and Sweldens~\cite[\S 7.8]{DaubSwel98} generated a linear phase  lifting factorization for a {unimodular} linear phase  filter bank using the   {Laurent}  polynomial EEA.

Now set $R_1 \leftarrow E_1 - F_1S = 0$.  Interestingly, $R_1$ also happens to be divisible by $z^{-1}$; this allows the resulting  factorization to be written
\begin{align}
\mathbf{H}(z) 
&=
\begin{bmatrix}
1 & S\\
0 & 1\vspace{-1pt}
\end{bmatrix}
\hspace{-2pt}
\begin{bmatrix}
R_0 & R_1\\
F_0 & F_1\vspace{-1pt}
\end{bmatrix}
=
\begin{bmatrix}
1 & (1+ z^{-1})/4\\
0 & 1\vspace{-1pt}
\end{bmatrix}
\hspace{-3pt}
\begin{bmatrix}
z^{-1} & 0\\
0 & 1\vspace{-1pt}
\end{bmatrix}
\hspace{-3pt}
\begin{bmatrix}
1 & 0\\
 -(1 + z^{-1})/2	&  1 \vspace{-1pt}
\end{bmatrix}. \label{LGT_0:0,1}
\end{align}
This is the causal linear phase lifting factorization~\eqref{LGT_EEA_col1} so we have obtained a different factorization than~\eqref{LGT_col0_factorization}, though it is not a \emph{new} factorization.

%


\section{Linear Diophantine Equations}\label{sec:LDE}
We now focus on the theory behind the CCA  to understand the differences between unimodular and causal lifting.  
Given $a,b,$ and $c$, a \emph{linear Diophantine equation}  (LDE) in  unknowns  $x$ and $y$ is an  equation of the form
\begin{equation}\label{LDE}
ax + by = c.
\end{equation}
Indeterminate equations of this type have been  studied over the integers (albeit not using modern notation) at least as far back as Diophantus of Alexandria.  The Indian astronomer Aryabhata (fifth--sixth centuries C.E.) and his colleagues and successors found the general solution to~(\ref{LDE}) over the integers using the Euclidean Algorithm. Indeed, judging from  van der Waerden~\cite[Chapter~5]{Waerden:83:Geometry-algebra-ancient}, it appears that the general solution found by  these ancient Indian scholars consisted of the integer version of the  Lifting Theorem (Corollary~\ref{cor:AbsLiftingTheorem}).

\subsection{Factorization in Commutative Rings}\label{sec:LDE:Factorization}
While we are mainly interested in LDEs over  rings of (causal) polynomials and (noncausal) Laurent polynomials, we first determine which aspects of lifting   follow from  general factorization theory and thus do not distinguish between the causal and noncausal cases.
We follow standard terminology for commutative rings $\mathcal{R}$  \cite{MacLaneBirkhoff67,Jacobson74,Hungerford74,Herstein75}. Divisibility of $b$ by $a\neq 0$ is denoted $a\divides b$; $a$ and $b$ are  \emph{associates} if $a\divides b$ and $b\divides a$.    
A multiplicative identity for   $\mathcal{R}$ is denoted 1. 
A \emph{unit}  is any element $u$ with a multiplicative inverse,  $u^{-1}$, such that $uu^{-1}=1$.
A nonzero nonunit, $c$, is \emph{irreducible} if its only divisors are units and associates.
A subset $A\subset\mathcal{R}$ is \emph{coprime} if the only common divisors of all $a_i\in A$ are  units.  
An \emph{integral domain} is a commutative ring with  identity that contains no \emph{zero divisors:} nonzero elements $a,b$ for which $ab=0$.
%
%

\subsubsection{Unique Factorization Domains}\label{sec:LDE:Factorization:UFD}
This is the least restrictive relevant class of  commutative rings.
A \emph{unique factorization domain} is any integral domain in which each nonzero nonunit can be factored into irreducibles that are ``unique modulo associates.''   This means that if $a = \Pi a_i = \Pi b_j$ are two (finite) factorizations of $a$  into irreducibles then each $a_i$ is an associate of a distinct $b_{j(i)}$, a requirement satisfied by polynomial rings.  A common divisor $h$ for a subset $A\subset\mathcal{R}$  is a \emph{greatest common divisor} (gcd) of $A$ if all common divisors of $A$ necessarily divide $h$.  
In  unique factorization domains every finite subset with a nonzero element  has a gcd~\cite[Theorem~III.3.11(iii)]{Hungerford74}.   Gcds are not unique, and we  write $h=\gcd(A)$ if $h$ is any gcd of $A$.  
The next two  lemmas are straightforward exercises.  
%
\begin{lemma}\label{lem:GCDs}
Let $\mathcal{R}$ be  a unique factorization domain and let $A$ be a finite subset with common divisior $d$:
$a_i = \tilde{a}_i d$ for all $a_i\in A$. 
Then $d=\gcd(A)$ if and only if the set of all  $\tilde{a}_i$ is coprime.
\end{lemma}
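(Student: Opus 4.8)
The plan is to argue both implications directly from the definition of gcd together with the cancellation law, which is available since a UFD is an integral domain. Throughout, write $a_i = \tilde a_i d$ for each $a_i \in A$, with $d$ a fixed common divisor of $A$.

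For the forward direction, suppose $d = \gcd(A)$ and let $h$ be any common divisor of $\{\tilde a_i\}$; I want to show $h$ is a unit. Then $hd$ divides each $a_i = \tilde a_i d$, so $hd$ is a common divisor of $A$, hence $hd \divides d$ by the defining property of the gcd. Write $d = hd\,x$ for some $x \in \mathcal R$. If $d = 0$ then $A = \{0\}$, the $\tilde a_i$ are all $0$, and the only ``coprime'' issue is vacuous/degenerate — I would handle this trivial case separately or note the standing hypothesis that $A$ has at least one nonzero element, so $d \neq 0$. With $d \neq 0$, cancel $d$ from $d = d(hx)$ to get $1 = hx$, so $h$ is a unit. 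Hence the $\tilde a_i$ are coprime.

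For the converse, suppose the $\tilde a_i$ are coprime and let $g = \gcd(A)$ (which exists in a UFD since $A$ is finite with a nonzero element). Since $d$ is a common divisor of $A$, we have $d \divides g$; I must show $g \divides d$ as well, so that $d$ and $g$ are associates and $d$ is itself a gcd of $A$. Write $a_i = \hat a_i g$ for each $i$, and $g = d t$ for some $t \in \mathcal R$. Then $\tilde a_i d = a_i = \hat a_i g = \hat a_i d t$, and cancelling $d \neq 0$ gives $\tilde a_i = \hat a_i t$ for all $i$; thus $t$ is a common divisor of the $\tilde a_i$, hence a unit by coprimality. Therefore $g = dt$ with $t$ a unit, so $g \divides d$, and $d$ is a gcd of $A$.

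The only mild subtlety is the degenerate/all-zero case and making sure the cancellation steps are applied only when $d \neq 0$; I expect this to be the one place requiring a word of care rather than a genuine obstacle. Everything else is a routine application of the gcd definition plus~\eqref{cancellation_law}, and notably the existence of gcds in a UFD (cited as \cite[Theorem~III.3.11(iii)]{Hungerford74}) is what lets the converse get off the ground.
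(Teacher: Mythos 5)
Your proof is correct. The paper states Lemma~\ref{lem:GCDs} as a ``standard lemma'' and omits its proof entirely, so there is nothing to compare against; your two-directional argument (forward: a common divisor $h$ of the $\tilde a_i$ makes $hd$ a common divisor of $A$, hence $hd\divides d$ and cancellation forces $h$ to be a unit; converse: take an existing gcd $g$, cancel $d$ to see the cofactor $t$ in $g=dt$ divides all $\tilde a_i$, hence is a unit, so $d$ and $g$ are associates) is a clean and standard route, and the only other common alternative is counting multiplicities of irreducibles directly. Your worry about the degenerate case is already absorbed by the paper's conventions: divisibility $a\divides b$ is defined only for $a\neq 0$, so the common divisor $d$ is automatically nonzero, and the existence of $\gcd(A)$ is guaranteed by the cited \cite[Theorem~III.3.11(iii)]{Hungerford74} once $A$ has a nonzero element.
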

%

%

%
\begin{lemma}\label{lem:Coprimeness}
Let $\mathcal{R}$ be a unique factorization domain and let $a,\,b\in\mathcal{R}$, $a\neq 0$. Then $a$ and $b$ are coprime if and only if, for all $c\in\mathcal{R}$, $a\divides bc$ implies $a\divides c$.
\end{lemma}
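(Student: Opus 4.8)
The plan is to prove the two directions of the biconditional in Lemma~\ref{lem:Coprimeness} separately, leaning on the existence of gcds in a unique factorization domain (UFD) together with Lemma~\ref{lem:GCDs}.

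First I would dispose of the easy direction. Suppose that for all $c\in\mathcal{R}$, $a\divides bc$ implies $a\divides c$. Let $h$ be a common divisor of $a$ and $b$; I want to show $h$ is a unit. Since $h\divides a$, write $a=ha'$; then $a\divides a$ trivially, but more to the point I would choose $c$ cleverly. The cleanest route: since $h\divides b$, we have $a\divides ba'$ would need justification, so instead argue directly—take any common divisor $h$ of $a,b$ and set $a=ha_1$, $b=hb_1$. Then $a\divides b a_1$? We have $ba_1 = h b_1 a_1$ and $a = h a_1$, so $a\divides b a_1$ iff $a_1 \divides b_1 a_1$, which holds. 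Hence by hypothesis $a\divides a_1$, i.e.\ $ha_1\divides a_1$, so by the cancellation law $h$ is a unit. Thus the only common divisors of $a$ and $b$ are units, i.e.\ $a$ and $b$ are coprime.

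For the forward direction, suppose $a$ and $b$ are coprime, and suppose $a\divides bc$; I must show $a\divides c$. The idea is to pass to gcds. In a UFD, $\gcd(b,c)$ exists; call it $g$, and write $b=g b_1$, $c = g c_1$, where by Lemma~\ref{lem:GCDs} the elements $b_1,c_1$ are coprime. Then $bc = g^2 b_1 c_1$. I would also consider $\gcd(a,c)$; but the crux is to show $a\divides c$ directly using unique factorization of the relevant elements. The natural argument: reduce to the prime (irreducible) level. For each irreducible $p$ (up to associates), let $v_p(x)$ denote the exponent of $p$ in the factorization of a nonzero $x$. The hypotheses translate to: $v_p(a)\le v_p(b)+v_p(c)$ for all $p$ (from $a\divides bc$), and $\min(v_p(a),v_p(b))=0$ for all $p$ (from coprimeness of $a,b$). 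I then claim $v_p(a)\le v_p(c)$ for all $p$, which gives $a\divides c$. Indeed, fix $p$: if $v_p(a)=0$ the inequality is trivial; if $v_p(a)>0$ then coprimeness forces $v_p(b)=0$, so $v_p(a)\le v_p(b)+v_p(c)=v_p(c)$. This handles the case $c\ne 0$; if $c=0$ then $a\divides c$ is immediate, and if $bc=0$ with $b\ne 0$ then $c=0$ by the domain property, so the degenerate cases are fine.

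The main obstacle is keeping the valuation argument honest in a general UFD: one must be careful that ``$v_p$'' is well defined only modulo choosing representatives of associate classes of irreducibles, and that the translation of $a\divides bc$ and of coprimeness into valuation inequalities is exactly the content of unique factorization (this is where Lemma~\ref{lem:GCDs}, or equivalently Theorem~III.3.11 of Hungerford, is really doing the work). An alternative that avoids valuations entirely would be to argue via gcds: one shows $\gcd(ab,ac)=a\gcd(b,c)$ (a standard UFD identity) and then, since $a\divides bc$ and $a\divides ac$, deduce $a\divides \gcd(ac,bc)=c\gcd(a,b)=c\cdot(\text{unit})$, giving $a\divides c$. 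I would likely present this gcd-based version as the cleaner proof, citing the associativity/distributivity properties of gcd in a UFD, and reserve the valuation description as intuition.
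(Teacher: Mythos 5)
Your proposal is correct in substance, but note that the paper itself supplies no proof of Lemma~\ref{lem:Coprimeness}: it is introduced (together with Lemma~\ref{lem:GCDs}) as a ``standard lemma'' and used without argument, so there is no proof of record to compare against. Both of your arguments are the expected standard ones. The easy direction via the choice $c=a_1$ (where $a=ha_1$, $b=hb_1$ for a common divisor $h$) followed by cancellation is fine, and in fact uses only that $\mathcal{R}$ is an integral domain. For the forward direction, either the irreducible-exponent (valuation) argument or the gcd route works; the gcd version is cleaner, as you say, provided you invoke that every common divisor of two elements divides their gcd (which is exactly how the paper defines gcd) together with the distributivity identity $\gcd(ac,bc)=c\gcd(a,b)$ valid in a UFD for $c\neq 0$. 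The only loose end is the degenerate case $b=0$, $c\neq 0$, which neither of your versions covers as written: $v_p(b)$ is then undefined, and $\gcd(ac,0)$ needs separate treatment. There coprimality of $a$ and $0$ forces $a$ to be a unit (every divisor of $a$ divides $0$, hence is a common divisor, hence a unit), so $a\divides c$ trivially; with that one-line patch the proof is complete.
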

\begin{definition}\label{defn:Homog_LDE}
Let $\mathcal{R}$ be a unique factorization domain with $a,b\in\mathcal{R}$ not both zero and consider an LDE~(\ref{LDE}) involving $a,b$.  
If  $c$ in~(\ref{LDE}) is nonzero then we call~(\ref{LDE})  an \emph{inhomogeneous LDE for $a,b$ with inhomogeneity $c$}.  
If  $c$ in~(\ref{LDE}) is zero,
\begin{equation}\label{Homog_LDE}
ax + by = 0,
\end{equation}
then~\eqref{Homog_LDE} is called the \emph{homogeneous LDE for $a,b$}.

Let $h\eqdef\gcd(a,b)$ so that, by Lemma~\ref{lem:GCDs}, $a=\tilde{a}h$ and $b=\tilde{b}h$ with $\tilde{a}$ and $\tilde{b}$ coprime. The \emph{reduced homogeneous LDE for $a,b$} is
\begin{equation}\label{reduced_homog_LDE}
\tilde{a}x + \tilde{b}y = 0.
\end{equation}
\end{definition}
%

%
\begin{lemma}\label{lem:HLDE_soln_sets}
If $\mathcal{R}$ is a unique factorization domain with  $a,b$ not both zero
then~\eqref{Homog_LDE} and \eqref{reduced_homog_LDE} have identical solutions.
\end{lemma}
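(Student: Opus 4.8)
The plan is to derive everything from the cancellation law~\eqref{cancellation_law}, using the factorization $a=\tilde{a}h$, $b=\tilde{b}h$ supplied by Lemma~\ref{lem:GCDs}. The first preliminary step is to note that $h\neq 0$: since $h$ is a common divisor of $a$ and $b$, the equality $h=0$ would force $a=b=0$, contradicting the hypothesis that $a,b$ are not both zero. This is the only point that requires a moment's thought, and it is needed precisely so that the cancellation law can be invoked later.

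For the forward inclusion I would take a solution $(x,y)$ of the reduced equation~\eqref{reduced_homog_LDE}, multiply $\tilde{a}x+\tilde{b}y=0$ through by $h$, and regroup: $(\tilde{a}h)x+(\tilde{b}h)y = ax+by = 0$, so $(x,y)$ solves~\eqref{Homog_LDE}. For the reverse inclusion I would take a solution $(x,y)$ of $ax+by=0$, substitute $a=\tilde{a}h$ and $b=\tilde{b}h$, and factor out $h$ to obtain $(\tilde{a}x+\tilde{b}y)\,h = 0 = 0\cdot h$. Since a unique factorization domain is in particular an integral domain and $h\neq 0$, the cancellation law~\eqref{cancellation_law} yields $\tilde{a}x+\tilde{b}y=0$, i.e.\ $(x,y)$ solves~\eqref{reduced_homog_LDE}. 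Combining the two inclusions shows the solution sets coincide.

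I do not anticipate any real obstacle here: once $h\neq 0$ is observed, each direction is a single algebraic manipulation followed by a direct appeal to cancellation, and no further properties of unique factorization domains (beyond their being integral domains) are used.
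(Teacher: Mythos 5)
Your proof is correct and follows exactly the route the paper intends: the paper omits the details, remarking only that the lemma follows easily from the cancellation law~\eqref{cancellation_law}, and your argument (noting $h\neq 0$, multiplying by $h$ for one inclusion and cancelling $h$ for the other) supplies precisely those details.
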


We can now characterize the solution sets to homogeneous and inhomogeneous LDEs.
\begin{theorem}[Homogeneous LDE Theorem]\label{thm:Homog_LDE}
Let $\mathcal{R}$ be a unique factorization domain with   $a,b$ not both zero, $h\eqdef\gcd(a,b)$,  $a=\tilde{a}h$ and $b=\tilde{b}h$.  Conditions (\ref{HC_1}) and (\ref{HC_2}) below are equivalent.
\begin{enumerate}[label={\roman*)},ref={\roman*}]		
\item\label{HC_1}  The pair $(x,y)$ satisfies~(\ref{Homog_LDE}).

\item\label{HC_2}  There exists a unique $s\in\mathcal{R}$ with $x=s\tilde{b}$ and $y=-s\tilde{a}$.
\end{enumerate}
If  (\ref{HC_1}) and (\ref{HC_2}) hold then the following are equivalent.
\begin{enumerate}[label={\roman*)},ref={\roman*}]	
\addtocounter{enumi}{2}						
\item\label{HC_3}  $x$ and $y$ are coprime.
\item\label{HC_4}  $s$ is a unit.
\end{enumerate}
\end{theorem}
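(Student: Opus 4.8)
The plan is to prove the biconditional (\ref{HC_1})$\Leftrightarrow$(\ref{HC_2}) first, and then to obtain (\ref{HC_3})$\Leftrightarrow$(\ref{HC_4}) as a short consequence of the explicit parametrization $(x,y)=(s\tilde b,-s\tilde a)$ supplied by (\ref{HC_2}).

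For the easy direction (\ref{HC_2})$\Rightarrow$(\ref{HC_1}) I would substitute directly: $ax+by=\tilde a h\,s\tilde b-\tilde b h\,s\tilde a=0$. For the converse (\ref{HC_1})$\Rightarrow$(\ref{HC_2}) I would first invoke Lemma~\ref{lem:HLDE_soln_sets} to replace~\eqref{Homog_LDE} by the reduced equation $\tilde a x+\tilde b y=0$ (legitimate since $h\neq 0$, as $a,b$ are not both zero). Then at least one of $\tilde a,\tilde b$ is nonzero, and by the symmetry of the equation under $(a,x)\leftrightarrow(b,y)$ I may assume $\tilde b\neq 0$. Rewriting as $\tilde b y=-\tilde a x$ shows $\tilde b\divides\tilde a x$; since $\tilde a,\tilde b$ are coprime (Lemma~\ref{lem:GCDs}), Lemma~\ref{lem:Coprimeness} forces $\tilde b\divides x$, say $x=s\tilde b$. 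Substituting back and cancelling $\tilde b$ via the cancellation law~\eqref{cancellation_law} gives $y=-s\tilde a$, and the same cancellation gives uniqueness of $s$. The case $\tilde b=0$ follows from the mirror-image argument with $\tilde a\neq 0$.

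For (\ref{HC_3})$\Leftrightarrow$(\ref{HC_4}), assuming (\ref{HC_1})/(\ref{HC_2}) I would apply Lemma~\ref{lem:GCDs} to the pair $(x,y)=(s\tilde b,-s\tilde a)$ with common divisor $s$: since $\tilde b$ and $-\tilde a$ are coprime, $s=\gcd(x,y)$. Hence $x$ and $y$ are coprime if and only if $\gcd(x,y)$ is a unit, which holds if and only if $s$ is a unit. (When $s=0$ we have $x=y=0$, which is not coprime, consistently with $0$ not being a unit.)

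The only genuine obstacle is the bookkeeping around zeros: one of $a,b$ — hence one of $\tilde a,\tilde b$ — may vanish, yet Lemma~\ref{lem:Coprimeness} and the cancellation law both require their active element to be nonzero. Singling out the nonzero member of $\{\tilde a,\tilde b\}$ at the outset, together with the $(a,x)\leftrightarrow(b,y)$ symmetry, keeps this from fragmenting into two long parallel proofs. It is also worth noting exactly where the hypotheses bite: the cancellation law (integrality) is what permits solving for $y$ and forcing uniqueness of $s$, while unique factorization — via Lemmas~\ref{lem:GCDs} and~\ref{lem:Coprimeness} — drives both the divisibility extraction of $s$ and the coprimeness-versus-unit dictionary.
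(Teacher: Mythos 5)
Your proposal is correct and takes essentially the same route as the paper's proof: pass to the reduced coprime equation via Lemma~\ref{lem:HLDE_soln_sets}, extract $s$ using Lemma~\ref{lem:Coprimeness} together with the cancellation law, and translate coprimality of $(x,y)$ into $s$ being a unit via Lemma~\ref{lem:GCDs}. The differences are only organizational: you dispose of the degenerate case by a WLOG symmetry and obtain $y=-s\tilde{a}$ by direct cancellation (and identify $s=\gcd(x,y)$ in one step), whereas the paper treats $\tilde{a}=0$ and $\tilde{b}=0$ separately, extracts a second divisor $r$ with $y=r\tilde{a}$ before showing $r=-s$, and proves the two implications of (iii)$\Leftrightarrow$(iv) separately.
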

\begin{proof}
(\ref{HC_1}$\Rightarrow$\ref{HC_2})  Let $(x,y)$ satisfy~(\ref{Homog_LDE}); by Lemma~\ref{lem:HLDE_soln_sets} $(x,y)$ also satisfies~\eqref{reduced_homog_LDE}.
Suppose $\tilde{a}=0$; $\tilde{a}$ and $\tilde{b}$ are coprime by Lemma~\ref{lem:GCDs} so coprimality implies that $\tilde{b}$ is a unit.  By~\eqref{reduced_homog_LDE},  $y=0$ and the unique solution to~(\ref{HC_2}) is  $s=xb^{-1}$.  Clause~(\ref{HC_2}) is similarly satisfied if $\tilde{b}=0$ so  assume   $\tilde{a},\tilde{b}\neq 0$.
We have $\tilde{a}\divides \tilde{b}y$ by~(\ref{reduced_homog_LDE}) so $\tilde{a}\divides y$ by Lemma~\ref{lem:Coprimeness}.  Thus, $y=r\tilde{a}$ for some uniquely determined $r\in\mathcal{R}$.  Similarly, $x=s\tilde{b}$ for a unique $s\in\mathcal{R}$.  One can therefore write~(\ref{reduced_homog_LDE}) as
$0 = \tilde{a}s\tilde{b} + \tilde{b}r\tilde{a} = \tilde{a}\tilde{b}(s+r).$
This implies  $r=-s$ 
since $\tilde{a}\tilde{b}\neq 0$,  proving~(\ref{HC_1}$\Rightarrow$\ref{HC_2}). 

(\ref{HC_2}$\Rightarrow$\ref{HC_1})  $(s\tilde{b},-s\tilde{a})$  satisfies~\eqref{reduced_homog_LDE} and therefore satisfies~(\ref{Homog_LDE}) by Lemma~\ref{lem:HLDE_soln_sets}.

Next, assume  that (\ref{HC_1}) and (\ref{HC_2}) hold.

(\ref{HC_3}$\Rightarrow$\ref{HC_4})  
$s$ is a common divisor of $x$ and $y$ by~(\ref{HC_2}) so if $x$ and $y$ are coprime then $s$ must be a unit, proving (\ref{HC_3}$\Rightarrow$\ref{HC_4}).  

(\ref{HC_4}$\Rightarrow$\ref{HC_3})  Let $c$ be a common divisor of $x=s\tilde{b}$ and $y=-s\tilde{a}$; if $s$ is a unit then   $c$ is also a common divisor of $\tilde{a}$ and $\tilde{b}$.  By Lemma~\ref{lem:GCDs} $\tilde{a}$ and $\tilde{b}$ are coprime  so $c$ must be a unit, implying  $x$ and $y$ are coprime.  
\qed
\end{proof}
%

%
\begin{corollary}[Abstract Lifting Theorem; cf.~\cite{Herley93,VetterliHerley:92:Wavelets-filter-banks,Sweldens96}]\label{cor:AbsLiftingTheorem}
Let $\mathcal{R}$ be a unique factorization domain with $a,b,c\in\mathcal{R}$, $a$ and $b$ not both zero.  Let $h\eqdef\gcd(a,b)$, $a=\tilde{a}h$, $b=\tilde{b}h$, and let $(x,y)$ satisfy~(\ref{LDE}).  Then $(x',y')$ is another solution to~(\ref{LDE})  if and only if  $x' = x+s\tilde{b}$ and $y' = y-s\tilde{a}$ for some  $s\in\mathcal{R}$.  
\end{corollary}
\begin{proof}
(Sufficiency)  If both $(x',y')$ and $(x,y)$ satisfy~(\ref{LDE})  then their difference,  $(x'-x,\,y'-y)$, satisfies the \emph{homogeneous} LDE~(\ref{Homog_LDE}) 
%
%
so Theorem~\ref{thm:Homog_LDE}(\ref{HC_1}$\Rightarrow$\ref{HC_2}) provides $s\in\mathcal{R}$ such that  $x' - x = s\tilde{b}$ and $y' - y = -s\tilde{a}$.  

(Necessity)  Let $(x,y)$ satisfy~(\ref{LDE}) and suppose that $x' = x+s\tilde{b}$, $y' = y-s\tilde{a}$ for some $s\in\mathcal{R}$. Theorem~\ref{thm:Homog_LDE}(\ref{HC_2}$\Rightarrow$\ref{HC_1}) implies that the pair $(x'-x,\,y'-y)$ satisfies~(\ref{Homog_LDE}) so, by bilinearity, the pair $(x',y')$ also satisfies~(\ref{LDE}). 
\qed\end{proof}
%

\rem  The Abstract Lifting Theorem (Corollary~\ref{cor:AbsLiftingTheorem}) provides all solutions to~(\ref{LDE})  in terms of the homogeneous solution set and any one   particular  solution.   
The homogeneous solution set $\{(s\tilde{b},\,-s\tilde{a}) \colon s\in\mathcal{R}\}$  is {never} just $\{(0,0)\}$, so LDEs over unique factorization domains  \emph{never} have  unique solutions.  
{Finding} a particular solution requires more   work, but unlike~\cite[Fact~4.1]{VetterliHerley:92:Wavelets-filter-banks}, which appeals to  the polynomial Euclidean Algorithm~\cite{Blahut:87:Fast-Algorithms}, we now  prove existence of inhomogeneous solutions  via more abstract methods by specializing the rings $R$ under consideration a bit more.

\subsubsection{Principal Ideal Domains}\label{sec:LDE:Factorization:PID}
The subset $\mathcal{I\subset R}$ is a \emph{principal ideal of $R$} if it is  generated by a single element, 
\begin{equation}\label{principal_ideal}
\mathcal{I} = (a) = a\mathcal{R} \eqdef \{ar\colon r\in\mathcal{R}\}.
\end{equation}
The  ideal generated by a  finite set of elements, $A\eqdef\{a_i\}_{i=0}^n$, is the subset
\begin{align}
\mathcal{I} = (A) & =  (a_0)+\cdots+(a_n) 
	 =  \{a_0 r_0+\cdots+a_n r_n\colon r_i\in\mathcal{R}\}.    \label{finitely_generated_ideal}
\end{align}
\emph{Principal ideal domains} are integral domains in which every ideal  is principal.  A classical (but nontrivial) fact  is that  principal ideal domains are unique factorization domains \cite[Theorem~III.3.7]{Hungerford74}. We need the following  basic result.

\begin{lemma}\cite[Theorem~III.3.11]{Hungerford74}\label{lem:GCDs_in_PIDs}
In a principal ideal domain  any  finite subset $A$ with a nonzero element has a gcd, and $h$ is a gcd for $A$ if and only if $(A)=(h)$.
\end{lemma}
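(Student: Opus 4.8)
The plan is to prove Lemma~\ref{lem:GCDs_in_PIDs} in two parts: first the existence of a gcd, then the characterization $(A) = (h) \iff h = \gcd(A)$. Since the paper states this as a ``basic result'' and cites Hungerford, I expect a short, direct argument using only the defining property of a principal ideal domain together with the ideal arithmetic recorded in~\eqref{ideal}--\eqref{finitely_generated_ideal} and the divisibility-versus-containment dictionary.

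First I would establish the key dictionary fact: for nonzero $a$ and arbitrary $b$, one has $a \divides b$ if and only if $(b) \subseteq (a)$. This is immediate from~\eqref{principal_ideal}, since $b = ax$ for some $x\in\mathcal{R}$ exactly says $b\in a\mathcal{R} = (a)$, and $(b)\subseteq(a)$ is equivalent to $b\in(a)$ because $(a)$ is an ideal. Consequently, $(A)\subseteq(a)$ if and only if $a$ divides every element of $A$, i.e.\ $a$ is a common divisor of $A$; this follows by applying the one-element statement to each generator $a_i$ of the finitely generated ideal~\eqref{finitely_generated_ideal}. I would note that, since $A$ contains a nonzero element, $(A)\neq(0)$, which matters for invoking the ``nonzero element'' hypothesis in the definition of gcd and for ensuring any generator of $(A)$ is nonzero.

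Next, for existence: because $\mathcal{R}$ is a principal ideal domain, the ideal $(A)$ is principal, say $(A) = (h)$ for some $h\in\mathcal{R}$, and $h\neq 0$ since $(A)\neq(0)$. I claim $h$ is a gcd of $A$. Indeed $(A)\subseteq(h)$ gives, by the dictionary, that $h$ is a common divisor of $A$. And if $d$ is any common divisor of $A$, then $(A)\subseteq(d)$, hence $(h) = (A)\subseteq(d)$, hence $d\divides h$ by the dictionary. So every common divisor of $A$ divides $h$, which is exactly the definition of gcd. This simultaneously proves existence and the ``if'' direction of the characterization: whenever $(A)=(h)$ with $h\neq0$, $h$ is a gcd.

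Finally, for the ``only if'' direction, suppose $h = \gcd(A)$; I must show $(A) = (h)$. Since $h$ is a common divisor of $A$, the dictionary gives $(A)\subseteq(h)$. For the reverse inclusion, use existence: let $(A) = (h')$ for some $h'$, which by the previous paragraph is also a gcd of $A$. Then $h = \gcd(A)$ divides $h'$ and, since $h'$ is a gcd, $h'$ divides $h$; so $h$ and $h'$ are associates and $(h) = (h')=(A)$. The main obstacle, such as it is, is purely bookkeeping: being careful that the ``nonzero element in $A$'' hypothesis is used precisely where needed (to guarantee $(A)\neq(0)$ so that a generator is nonzero and the gcd of the empty set of divisors issue does not arise), and keeping the two directions of the divides/contains correspondence straight. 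No delicate estimate or construction is involved.
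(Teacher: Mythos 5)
Your proof is correct: the divisibility-versus-containment dictionary, the PID step giving a nonzero generator $h$ of $(A)$, and the associate argument for the converse are all sound and complete. The paper itself offers no proof of this lemma — it simply cites Hungerford — and your argument is essentially the standard textbook proof given in that reference, so there is nothing to reconcile.
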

%
%
%
%
%

%
\begin{theorem}[Abstract Bezout Theorem]\label{thm:Inhomog_LDE}
Let $\mathcal{R}$ be a principal ideal domain with $a,b,c\in\mathcal{R}$, $a$ and $b$ not both zero, $h\eqdef\gcd(a,b)$.  There exists a solution $(x,y)$ to the  LDE~(\ref{LDE})  for inhomogeneity $c$ if and only if  $h\divides c$.
\end{theorem}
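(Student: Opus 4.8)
The plan is to prove both implications directly, with the easy direction using only divisibility bookkeeping and the substantive direction resting on Lemma~\ref{lem:GCDs_in_PIDs}.

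For necessity, suppose $(x,y)$ solves~\eqref{LDE}, i.e.\ $ax+by=c$. Since $h=\gcd(a,b)$ divides both $a$ and $b$, it divides $ax$ and $by$, hence divides their sum $c$. This uses nothing beyond the definition of a common divisor and closure of the divisibility relation under $\mathcal{R}$-linear combinations, so it is immediate.

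For sufficiency, assume $h\divides c$ and write $c=hk$ for some $k\in\mathcal{R}$. Apply Lemma~\ref{lem:GCDs_in_PIDs} to the two-element subset $\{a,b\}$: since $a,b$ are not both zero, the subset has a gcd and, because $h$ is such a gcd, the ideal it generates coincides with the ideal generated by $\{a,b\}$, namely $(h)=(\{a,b\})$. By the description~\eqref{finitely_generated_ideal} of a finitely generated ideal, membership $h\in(\{a,b\})$ means there exist $x_{0},y_{0}\in\mathcal{R}$ with $ax_{0}+by_{0}=h$. Multiplying through by $k$ gives $a(kx_{0})+b(ky_{0})=hk=c$, so $(x,y)\eqdef(kx_{0},ky_{0})$ solves~\eqref{LDE}. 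This completes the equivalence.

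I expect no genuine obstacle here; the only point requiring care is citing Lemma~\ref{lem:GCDs_in_PIDs} in the correct direction---that \emph{any} gcd $h$ of $\{a,b\}$ satisfies $(\{a,b\})=(h)$, whence $h$ itself lies in the ideal and is therefore an $\mathcal{R}$-linear combination of $a$ and $b$---and then noting that scaling a representation of $h$ by the cofactor $k$ produces a representation of $c$. One may remark afterwards that, in view of Corollary~\ref{cor:GenLiftingTheorem}, once one solution exists the full solution set is $\{(x+s\tilde b,\,y-s\tilde a)\colon s\in\mathcal{R}\}$ with $\tilde a,\tilde b$ the coprime parts of $a,b$, so solvability is the only issue and uniqueness never holds.
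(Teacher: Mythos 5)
Your proposal is correct and follows essentially the same route as the paper: necessity by the trivial divisibility observation, and sufficiency via Lemma~\ref{lem:GCDs_in_PIDs} giving $(h)=(a)+(b)$ together with the description~\eqref{finitely_generated_ideal} of finitely generated ideals. The only cosmetic difference is that the paper places $c$ directly in $(h)=(a)+(b)$, whereas you first represent $h$ as a combination and then scale by the cofactor $k$; these are the same argument.
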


\begin{proof}
If~(\ref{LDE}) has a solution then \emph{any} common divisor of  $a$ and $b$ divides $c$.
Conversely,  if $h\divides c$ then, by~\eqref{principal_ideal} and Lemma~\ref{lem:GCDs_in_PIDs},
$c\in(h)=(\{a,b\}).$  By~(\ref{finitely_generated_ideal}) this means $c\in (a)+(b)$ so there exist $x,y\in\mathcal{R}$ such that $c=ax+by$.  
\qed\end{proof}
%


\subsection{Factorization in Euclidean Domains}\label{sec:LDE:Euclidean}
Per~\cite{Bris:10:GLS-I}, the goal  of lifting  is usually to find  factorizations  that are ``size-reducing''  in some sense.  While the  matrix polynomial order was shown to be a useful measure of ``size'' in~\cite{Bris:10:GLS-I,Bris:10b:GLS-II} for linear phase liftings, that approach does not work with more general  lifting factorizations.  
Instead,  the present paper  exploits the strong factorization theory for  scalar polynomials.  Their natural ``size'' function  leads us to Euclidean domains~\cite[Definition~III.3.8]{Hungerford74}, which are automatically principal ideal domains~\cite[Theorem~III.3.9]{Hungerford74}.
\begin{definition}\label{defn:EuclideanDomain}
A \emph{Euclidean domain}, $\mathcal{R}$, is an integral domain equipped with a \emph{Euclidean size function}, 
$\sigma:\mathcal{R}\backslash\{0\}\rightarrow\mathbb{N}\eqdef\{0,1,2,\ldots\}$, 
that satisfies the following two axioms for  $a,b\in\mathcal{R}$.
\begin{enumerate}[label={\roman*)},ref={\roman*}]		
\item \label{EucDom_1}  
(Monotonicity)  If $a,b\neq 0$ then $\sigma(a)\leq\sigma(ab)$.
\item \label{EucDom_2}  
(Division algorithm)  If $b\neq 0$ then there exist $q,r\in\mathcal{R}$ such that $a=qb+r$, where either $r= 0$ or  $\sigma(r)<\sigma(b)$.
\end{enumerate}
%
\end{definition}

\subsubsection{Examples and Special Properties}\label{sec:LDE:Euclidean:Examples}
\begin{example}\label{exmp:Polynomials}
$\mathbb{F}[\zeta]$ is the ring of  polynomials in $\zeta$, $f(\zeta)=\sum_{i= 0}^{\deg(f)}f_i\zeta^i$, with coefficients $f_i\in\mathbb{F}$ for some field $\mathbb{F}$ (e.g., the field of real or complex numbers).  Let 
\begin{equation}\label{degree_size_fn}
\sigma(f)\eqdef\deg(f),\mbox{ the polynomial degree of $f$.}
\end{equation}
$\mathbb{F}[\zeta]$  with polynomial  division and the size function~(\ref{degree_size_fn}) is a Euclidean domain.  Moreover,  division of polynomials yields a \emph{unique} quotient and remainder satisfying axiom~(\ref{EucDom_2}) \cite[Theorem~IV.21]{MacLaneBirkhoff67}, \cite[Theorem~III.6.2]{Hungerford74}.
The units in $\mathbb{F}[\zeta]$ are the nonzero constant polynomials, which have size (degree) zero.  
As is commonly done in algebra,  we set  $\deg(0)\eqdef -\infty$.
With this convention, the  degree function enjoys two additional important properties.
\begin{align}
&\mbox{Homomorphism property of $\deg(f)$:\ \ }	\deg(fg) = \deg(f) + \deg(g). \label{deg_homomorphism}\\
&\mbox{Max-additive bound on $\deg(f)$:\ \ } \deg(f+g) \leq \max\{\deg(f),\, \deg(g)\}. \label{deg_add_bound}
\end{align}
\end{example}

%

\begin{example}\label{exmp:LaurentPolys}
Write Laurent polynomials, $f\in\mathbb{F}[\zeta,\,\zeta^{-1}]$, over a field $\mathbb{F}$ as 
\begin{align}
f(\zeta) = \sum_{i=m}^n f_i \zeta^i, &\mbox{\ where $f_m, f_n\neq 0;\;-\infty<m\leq n<\infty$, and}\nonumber  \\
\Lord(f) &\eqdef n-m \geq 0, \text{\ the \emph{Laurent order} (or ``length'') of  $f$.}\label{L_order}
\end{align}
$\mathbb{F}[\zeta,\,\zeta^{-1}]$ equipped with  $\sigma(f)\eqdef\Lord(f)$ and   Laurent polynomial division satisfies the axioms of a Euclidean domain, but the quotient and remainder satisfying axiom~(\ref{EucDom_2}) are not  \emph{unique} in the Laurent case.  For instance, if 
$a(\zeta) \eqdef 1 + \zeta + \zeta^2$ and $b(\zeta) \eqdef 1 + \zeta$
then the unique solution to~\eqref{EucDom_2} over the polynomials $\mathbb{F}[\zeta]$ is
\[  q(\zeta) = \zeta,\;\; r(\zeta)= 1;\;\;\deg(r)<\deg(b).  \]
This is also a solution over the Laurent polynomials, but so is
\[  q(\zeta) = 1,\;\; r(\zeta)= \zeta^2;\;\;\Lord(r)<\Lord(b).  \]

The units in $\mathbb{F}[\zeta,\,\zeta^{-1}]$ are the nonzero Laurent monomials, which have  Laurent order zero.  
As with the polynomial degree, we define $\Lord(0)\eqdef -\infty$.
The Laurent order satisfies  the homomorphism property~(\ref{deg_homomorphism}),
\begin{equation}\label{L_ord_homomorphism}
\Lord(fg)  =  \Lord(f) + \Lord(g).
\end{equation}
It does \emph{not} satisfy~(\ref{deg_add_bound}), though: 
$\Lord(\zeta + \zeta^2) = 1$ but $\Lord(\zeta^2) = \Lord(\zeta) = 0.$
This  shows that  the Laurent polynomials have ``too many units'' since each term in a Laurent polynomial is a Laurent unit, of size 0.  
\end{example}
%


\section{Factorization in Polynomial Rings}\label{sec:Polynomial}
The max-additive bound~(\ref{deg_add_bound}) implies uniqueness of quotients and remainders in polynomial division~\cite{MacLaneBirkhoff67,Hungerford74}.  In fact,  division  in  a Euclidean domain produces unique quotients and remainders if and \emph{only if} the size function satisfies an inequality like~(\ref{deg_add_bound})~\cite[Proposition~II.21]{Coppel:06:Number-Theory}.  
We will see that~(\ref{deg_add_bound})  also yields unique ``degree-reducing'' solutions to \emph{polynomial} LDEs, implying unique  causal lifting  results  that do \emph{not} hold over the Laurent (noncausal) polynomials.

\subsection{Size-Reducing Solutions  to LDEs}\label{sec:Polynomial:SizeReducing}
We  introduce  terminology for ``size-reducing'' solutions to LDEs in  Euclidean domains, 
generalizing Definition~\ref{defn:CausalComplement}.  
The size-reducing concept is important enough in the context of lifting factorization to warrant precise definitions, which seem to be lacking in the algebra literature.

\begin{definition}[Size-Reducing Solutions]\label{defn:SizeReducing}
Let $\mathcal{R}$ be  a Euclidean domain whose size function satisfies $\sigma(0)=-\infty$ and the homomorphism property,
\begin{equation}\label{size_homo}
\sigma(ab) = \sigma(a) + \sigma(b).
\end{equation}
Let $a,b,c\in\mathcal{R}$ with $a$ and $b$ not both zero. Let $h\eqdef\gcd(a,b)$ and assume $h\divides c$.   
A solution $(x,y)$  to
\begin{equation}\label{Inhomog_LDE}
ax + by = c 
\end{equation}
will be called \emph{size-reducing in $a\neq 0$} (resp., \emph{size-reducing in $b\neq 0$}) if 
\begin{align}
\sigma(y) < \sigma(a) - \sigma(h)\quad\text{(resp., $\sigma(x) < \sigma(b) - \sigma(h)$).}
\end{align}
%
%
\end{definition}

\rem By~\eqref{size_homo}, the ``correction'' terms  $\sigma(h)$  ensure that  $\sigma(a) - \sigma(h)$ and $\sigma(b) - \sigma(h)$ are invariant under cancellation of common divisors for $a$, $b$, and $c$ in~\eqref{Inhomog_LDE}, a manipulation that leaves the solution set unchanged.
We now show that   size-reducing solutions in $a$ and  $b$ may or may not agree.

\begin{example}\label{exmp:two_unique_solns}
Let  $\mathcal{R}=\mathbb{F}[\zeta]$ for some field $\mathbb{F}$, and define 
\[  a(\zeta) \eqdef 1 + \zeta,\;\; b(\zeta) \eqdef 1,\;\; c(\zeta) \eqdef \zeta,\;\; h\eqdef\gcd(a,b)=1, \mbox{ with } \deg(h)=0. \]
The pair $(x,y)=(1,-1)$ is the unique solution  to~(\ref{Inhomog_LDE}) that is degree-reducing in $a$,  $\deg(y)<\deg(a)$, while   $(0,\zeta)$ is the unique degree-reducing solution in $b$.
\end{example}
\begin{example}\label{exmp:one_common_soln}
Now redefine $a(\zeta)$ slightly: 
\[  a(\zeta) \eqdef 1 + \zeta^2,\;\; b(\zeta) \eqdef 1,\;\; c(\zeta) \eqdef \zeta,\;\; h\eqdef\gcd(a,b)=1.  \]
The pair $(x,y)=(0,\zeta)$ from Example~\ref{exmp:two_unique_solns} is still the unique solution to~\eqref{Inhomog_LDE} that is degree-reducing  in $b$, but  it is now the unique solution that is degree-reducing in $a$, too. 
If we reinterpret~\eqref{Inhomog_LDE} as a problem over the \emph{Laurent} polynomials, however, then  a second solution that is  size-reducing in $a$ is
\[ (x,y)=(\zeta^{-1},\, -\zeta^{-1}),\;\;  \Lord(y) = 0 < \Lord(a) = 2. \]
\end{example}

\subsection{LDEs in Polynomial Domains}\label{sec:Polynomial:LDEs}
We now show, using  inequality~(\ref{deg_add_bound}),  that an LDE over a \emph{polynomial} domain has exactly one solution that is degree-reducing in $a\neq 0$ or $b\neq 0$.    It is possible, as in Example~\ref{exmp:two_unique_solns},  to have  different  degree-reducing solutions in $a$ and in $b$ or, as in Example~\ref{exmp:one_common_soln}, to have one solution that is degree-reducing in both $a$ and $b$. The question of  which alternative holds is answered  by Theorem~\ref{thm:LDDRT}.
\begin{lemma}\label{lem:LDE_uniqueness}
Let $a,b,c\in\mathbb{F}[\zeta]$ with $a$ and $b$ not both zero.  Let $h\eqdef\gcd(a,b)$ and assume  $h\divides c$.  
Let $(x,y)$ be a solution to~\eqref{Inhomog_LDE}, and suppose $(x,y)$ is degree-reducing in $a$ (resp., $b$).
If $(x',y')$ is another solution to~(\ref{Inhomog_LDE}) that is degree-reducing in $a$ (resp., $b$) then $x'=x$ and $y'=y$.
\end{lemma}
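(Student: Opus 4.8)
The plan is to use the Generalized Lifting Theorem (Corollary~\ref{cor:GenLiftingTheorem}) to parameterize the difference of two solutions, and then show the degree-reducing hypothesis forces that difference to vanish. Concretely, write $h\eqdef\gcd(a,b)$, $a=\tilde a h$, $b=\tilde b h$ with $\tilde a,\tilde b$ coprime (Lemma~\ref{lem:GCDs}). Given two solutions $(x,y)$ and $(x',y')$ of~\eqref{Inhomog_LDE}, Corollary~\ref{cor:GenLiftingTheorem} yields a unique $s\in\mathbb{F}[\zeta]$ with $x'=x+s\tilde b$ and $y'=y-s\tilde a$. The goal is to prove $s=0$, since that gives $x'=x$ and $y'=y$ immediately.

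First I would handle the degenerate cases. If $\tilde a=0$ (equivalently $a=0$, excluded here since we assume $a\neq 0$ in this branch) or if one of $\tilde a,\tilde b$ is a unit, the argument simplifies; but the main case is $\tilde a\neq 0$. Suppose, for the branch where both solutions are degree-reducing in $a$, that $s\neq 0$. Then by the homomorphism property~\eqref{deg_homomorphism}, $\deg(s\tilde a)=\deg(s)+\deg(\tilde a)\geq \deg(\tilde a)=\deg(a)-\deg(h)$, using $\deg(s)\geq 0$ for $s\neq 0$. On the other hand, $y-y'=s\tilde a$, and by the max-additive bound~\eqref{deg_add_bound},
\begin{equation*}
\deg(s\tilde a)=\deg(y-y')\leq\max\{\deg(y),\deg(y')\}<\deg(a)-\deg(h),
\end{equation*}
since both $(x,y)$ and $(x',y')$ are degree-reducing in $a$, i.e.\ satisfy~\eqref{size_red_a}. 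This contradicts the lower bound $\deg(s\tilde a)\geq\deg(a)-\deg(h)$, so $s=0$. The case of solutions degree-reducing in $b$ is symmetric, using $x-x'=s\tilde b$, the bound $\deg(s\tilde b)\geq\deg(\tilde b)=\deg(b)-\deg(h)$ for $s\neq 0$, and~\eqref{size_red_b}.

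The one point requiring care—and the only real obstacle—is the interplay between the max-additive bound and the definition of $\sigma(0)=-\infty$: if, say, $y=y'$ already, then $y-y'=0$ and $\deg(0)=-\infty$, which is consistent with $s=0$ but needs the convention stated just before Example~\ref{exmp:LaurentPolys} so that the inequality chain doesn't break. Since the whole argument is organized as ``assume $s\neq 0$, derive a contradiction,'' and $s\neq 0$ forces $s\tilde a\neq 0$ (as $\tilde a\neq 0$ and $\mathbb{F}[\zeta]$ is an integral domain), the degrees in play are all genuine nonnegative integers and the bounds~\eqref{deg_homomorphism}–\eqref{deg_add_bound} apply cleanly. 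No appeal to the Euclidean algorithm or to existence of solutions is needed here; this is purely a uniqueness statement riding on Corollary~\ref{cor:GenLiftingTheorem} plus the special properties of the degree function that distinguish $\mathbb{F}[\zeta]$ from $\mathbb{F}[\zeta,\zeta^{-1}]$.
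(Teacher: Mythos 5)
Your proof is correct and follows essentially the same route as the paper's: both parameterize the difference of the two solutions via the homogeneous-LDE machinery (you cite Corollary~\ref{cor:GenLiftingTheorem}, the paper applies Theorem~\ref{thm:Homog_LDE} directly to $(x-x',\,y-y')$, which is the same thing) and then force $s=0$ from the degree bounds. The only cosmetic difference is that you obtain $\deg(s\tilde a)\geq\deg(\tilde a)$ from the homomorphism property~\eqref{deg_homomorphism} with $\deg(s)\geq 0$, while the paper invokes monotonicity~\eqref{EucDom_1prime}; both are valid.
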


\begin{proof}   Assume $(x,\,y)$ and  $(x',\,y')$ are both degree-reducing in  $a\neq 0$ (the proof is similar if both  are degree-reducing in  $b$),
\begin{equation}\label{deg_reducing_in_b}
\deg(y),\;\deg(y')<\deg(a) - \deg(h). 
\end{equation}
Factor $h$ out of $a,b,$ and $c\,$:\,
$a=h\tilde{a}$,  $b=h\tilde{b}$,  and $c=h\tilde{c}$,
where~(\ref{deg_homomorphism}) implies
\begin{equation}\label{deg_of_h_times_a_tilde}
\deg(a) = \deg(h)+\deg(\tilde{a}),\text{ etc.}
\end{equation}
Let $x''\eqdef x-x'$ and $y''\eqdef y-y'$; then $(x'',\,y'')$ satisfies the corresponding reduced homogeneous LDE~(\ref{reduced_homog_LDE}), $\tilde{a}x''+\tilde{b}y''=0$.  Theorem~\ref{thm:Homog_LDE} supplies  $s$ such that $x''=s\tilde{b}$ and $y''=-s\tilde{a}$.  
By inequality~(\ref{deg_add_bound}), assumption~(\ref{deg_reducing_in_b}), and~(\ref{deg_of_h_times_a_tilde})  
%
\begin{align*}
\deg(s\tilde{a}) \,=\, \deg(y'') \,
&\leq \,\max\{\deg(y),\deg(y')\} \\
&< \,\deg(a) - \deg(h) \,=\, \deg(\tilde{a}) \,\leq\, \deg(s\tilde{a})\text{\ \  if $s\neq 0$,}
\end{align*}
which is a contradiction unless  $s=0$. Thus,
$x-x'   =  x''  =  s\tilde{b}   =  0$ and $y-y'   =  y''  =  -s\tilde{a}  =  0$,
proving uniqueness of degree-reducing solutions. 
\qed
\end{proof}
%

The main result in this section, the Linear Diophantine Degree-Reduction Theorem (or LDDRT, Theorem~\ref{thm:LDDRT}), is a strengthened version of \cite[Theorem~4.10(iii)]{GathenGerhard:13:Modern-Computer-Algebra} and Bezout's Theorem for polynomials, 
e.g.~\cite[Theorem~6.1.1]{Daub92}.
It uses the Abstract Lifting Theorem (Corollary~\ref{cor:AbsLiftingTheorem}), the Abstract Bezout Theorem (Theorem~\ref{thm:Inhomog_LDE}),  Lemma~\ref{lem:LDE_uniqueness}, and the  polynomial division algorithm 
to characterize existence and  uniqueness of \emph{degree-reducing} solutions to inhomogeneous  LDEs over \emph{polynomial} domains.  Theorem~\ref{thm:LDDRT}(\ref{DegRed_1})  is also used to prove  Corollary~\ref{cor:GDT}, which the author has been unable to find in the literature.
\begin{theorem}[Linear Diophantine Degree-Reduction Theorem]\label{thm:LDDRT}
Let  $a,b,c\in\mathbb{F}[\zeta]$, $a$ and $b$ not both zero.  Let $h\eqdef\gcd(a,b)$; assume  $h\divides c$.  
\begin{enumerate}[label={\roman*)},ref={\roman*}]		
\item\label{DegRed_1}
If $a\neq 0$ then there exists a unique solution, $(x,y)$, to the LDE
\begin{equation}\label{LDE_again_again}
ax +by = c  
\end{equation}
that is degree-reducing in $a$,
\begin{equation}\label{deg_reducing_a}
\deg(y) < \deg(a) - \deg(h).
\end{equation}
\item\label{DegRed_2}
If $b\neq 0$ then there exists a unique solution, $(x',y')$, to~(\ref{LDE_again_again}) that is degree-reducing in $b$,
\begin{equation}\label{deg_reducing_b}
\deg(x') < \deg(b) - \deg(h).
\end{equation}
\item\label{DegRed_3}
Let $a,b\neq 0$ and let $(x,y)$ and $(x',y')$ be the  solutions  in clauses~(\ref{DegRed_1}) and~(\ref{DegRed_2}).  These two solutions are the same,  $x= x'$ and $y= y'$, if and only if
\begin{equation}\label{c_small}
\deg(c) < \deg(a)+\deg(b)-\deg(h)\,.
\end{equation}
%
\end{enumerate}
\end{theorem}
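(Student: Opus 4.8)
The plan is to derive clauses~(\ref{DegRed_1}) and~(\ref{DegRed_2}) from existence of \emph{some} solution (Theorem~\ref{thm:Inhomog_LDE}) together with the parametrization of all solutions (Corollary~\ref{cor:GenLiftingTheorem}) and the uniqueness already established in Lemma~\ref{lem:LDE_uniqueness}. First, since $h\divides c$, Theorem~\ref{thm:Inhomog_LDE} gives at least one solution $(x_0,y_0)$ to~\eqref{LDE_again_again}. Writing $a=h\tilde a$, $b=h\tilde b$ with $\tilde a,\tilde b$ coprime (Lemma~\ref{lem:GCDs}), Corollary~\ref{cor:GenLiftingTheorem} says every solution has the form $(x_0+s\tilde b,\;y_0-s\tilde a)$ for $s\in\mathbb{F}[\zeta]$. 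To get clause~(\ref{DegRed_1}), apply the polynomial division algorithm to divide $y_0$ by $\tilde a$ (legal since $a\neq0$ forces $\tilde a\neq0$): write $y_0 = s\tilde a + r$ with $\deg(r)<\deg(\tilde a)=\deg(a)-\deg(h)$ by~\eqref{deg_homomorphism}. Then $(x,y)\eqdef(x_0+s\tilde b,\,y_0-s\tilde a)=(x_0+s\tilde b,\,r)$ is a solution with $\deg(y)=\deg(r)<\deg(a)-\deg(h)$, i.e.\ degree-reducing in $a$; uniqueness is exactly Lemma~\ref{lem:LDE_uniqueness}. Clause~(\ref{DegRed_2}) is the symmetric argument, dividing $x_0$ by $\tilde b$.

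For clause~(\ref{DegRed_3}) the two solutions $(x,y)$ and $(x',y')$ from~(\ref{DegRed_1}) and~(\ref{DegRed_2}) differ by a homogeneous solution, so by Corollary~\ref{cor:GenLiftingTheorem} there is a single $s$ with $x'=x+s\tilde b$, $y'=y-s\tilde a$, and they coincide iff $s=0$. The strategy is to read off $\deg(x')$ and $\deg(y)$ and translate the degree-reducing conditions into the statement that the ``missing'' inequality on the other unknown is automatic precisely when~\eqref{c_small} holds. Concretely: from~\eqref{LDE_again_again}, $by = c - ax$, so $\deg(by)\leq\max\{\deg(c),\deg(ax)\}$ by~\eqref{deg_add_bound}; combining with the degree-reducing bound $\deg(x')<\deg(b)-\deg(h)$ for the solution $(x',y')$ one estimates $\deg(a x') < \deg(a)+\deg(b)-\deg(h)$, so if~\eqref{c_small} holds then $\deg(b y') = \deg(c-ax') \leq \max\{\deg(c),\deg(ax')\} < \deg(a)+\deg(b)-\deg(h)$, whence (cancelling $\deg(b)$ via~\eqref{deg_homomorphism}) $\deg(y')<\deg(a)-\deg(h)$; that makes $(x',y')$ degree-reducing in $a$ as well, so by the uniqueness in~(\ref{DegRed_1}) we get $(x',y')=(x,y)$. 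Conversely, if the two solutions are equal, call it $(x,y)$, then both $\deg(x)<\deg(b)-\deg(h)$ and $\deg(y)<\deg(a)-\deg(h)$; then $\deg(ax)=\deg(a)+\deg(x)<\deg(a)+\deg(b)-\deg(h)$ and likewise $\deg(by)<\deg(a)+\deg(b)-\deg(h)$, so $\deg(c)=\deg(ax+by)\leq\max\{\deg(ax),\deg(by)\}<\deg(a)+\deg(b)-\deg(h)$, which is~\eqref{c_small}.

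I expect the main obstacle to be bookkeeping around the degenerate cases rather than any deep step: in clause~(\ref{DegRed_3}) one must be careful when $a$, $b$, $c$, or intermediate products vanish or are units, since $\deg$ of zero is $-\infty$ and several of the chained inequalities above implicitly assume the relevant quantities are nonzero when invoking~\eqref{deg_homomorphism}. The convention $\deg(0)=-\infty$ makes~\eqref{deg_add_bound} hold unconditionally, so the max-additive estimates go through, but the homomorphism step $\deg(ax)=\deg(a)+\deg(x)$ needs $x\neq0$ (it is harmless if $x=0$ since then $ax=0$ and the inequality $\deg(ax)<\deg(a)+\deg(b)-\deg(h)$ is trivially true as long as the right side exceeds $-\infty$, which holds because $b\neq0$). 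Assembling these edge cases cleanly — and making sure the ``cancel $\deg(b)$'' move is justified — is the only real care point; everything else is a direct application of results already proved.
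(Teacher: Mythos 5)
Your proposal is correct and follows essentially the same route as the paper: existence via Theorem~\ref{thm:Inhomog_LDE}, division of the particular solution by $\tilde a$ (resp.\ $\tilde b$) plus Corollary~\ref{cor:GenLiftingTheorem} for existence of the degree-reducing solution, Lemma~\ref{lem:LDE_uniqueness} for uniqueness, and the max-additive/homomorphism interplay for clause~(\ref{DegRed_3}). The only cosmetic differences are that the paper first cancels $h$ to work with the coprime equation $\tilde a x+\tilde b y=\tilde c$ while you carry the $-\deg(h)$ correction terms throughout, and that in the converse direction of~(\ref{DegRed_3}) you show the $b$-reducing solution is also $a$-reducing where the paper argues symmetrically.
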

\begin{proof}    
As in the proof of Lemma~\ref{lem:LDE_uniqueness}, if $a$ and $b$ are not coprime then we  factor $h$ out of $a$, $b$, and $c$ in~(\ref{LDE_again_again}),  leaving an equivalent LDE with $\tilde{a}$ and $\tilde{b}$ coprime,
\begin{equation}\label{coprime_LDE}
\tilde{a}x+\tilde{b}y=\tilde{c}.
\end{equation}
By~(\ref{deg_of_h_times_a_tilde}) the degree-reducing conditions~\eqref{deg_reducing_a} (resp., \eqref{deg_reducing_b}) are equivalent to 
\begin{align}\label{deg_reducing_a_tilde}
\deg(y)<\deg(\tilde{a})\quad (\text{resp.,\ }\deg(x')<\deg(\tilde{b})).
\end{align} 
Formula~(\ref{deg_of_h_times_a_tilde})  also implies that~(\ref{c_small}) 
is equivalent to
\begin{align}
\deg(\tilde{c})&< \deg(\tilde{a})+\deg(\tilde{b})\,.\label{c_tilde_small}
\end{align}
Since $h\divides c$, by Theorem~\ref{thm:Inhomog_LDE} there exists a solution $(x^*,y^*)$ to~(\ref{LDE_again_again}) and~(\ref{coprime_LDE}).

(\ref{DegRed_1})  Divide $\tilde{a}\neq 0$ into $y^*$ to get $q$ and $y$ satisfying
\begin{equation}\label{y_star_over_a}
y^* = q\tilde{a}+y,\quad \deg(y) < \deg(\tilde{a}).
\end{equation}
Let $x\eqdef x^* + q\tilde{b}$; then $(x,y) = (x^*,y^*) + (q\tilde{b},-q\tilde{a})$.
By Corollary~\ref{cor:AbsLiftingTheorem} $(x,y)$ is also a solution to~(\ref{LDE_again_again}).  Inequality~(\ref{y_star_over_a}) is precisely~(\ref{deg_reducing_a_tilde}) so $(x,y)$ is degree-reducing in $a$ and uniqueness  follows from Lemma~\ref{lem:LDE_uniqueness}.

(\ref{DegRed_2})  The proof of clause~(\ref{DegRed_2}) is similar to the proof of~(\ref{DegRed_1}).

(\ref{DegRed_3})  
Let $(x,y)$ and $(x',y')$ be the unique degree-reducing solutions given in clauses~(\ref{DegRed_1}) and~(\ref{DegRed_2}), respectively,
\begin{align}
\deg(y) < \deg(\tilde{a})\text{ and }\deg(x') < \deg(\tilde{b}).\label{coprime_deg_reducing_a}
\end{align}
Suppose  $(x,y)=(x',y')$; applying~(\ref{deg_add_bound}) and~(\ref{deg_homomorphism}) to~\eqref{coprime_LDE}  proves~\eqref{c_tilde_small},
\begin{align*}
\deg(\tilde{c})  \leq  \max\{\deg(\tilde{a}x),\,\deg(\tilde{b}y)\} 
&=  \max\{\deg(\tilde{a})+\deg(x),\,\deg(\tilde{b})+\deg(y)\}  \\
&<  \deg(\tilde{a})+\deg(\tilde{b})\text{ by~(\ref{coprime_deg_reducing_a}) since } x'=x.
\end{align*}  

Conversely, assume~\eqref{c_tilde_small}.  By~(\ref{deg_homomorphism}) and~(\ref{coprime_deg_reducing_a}) we  have
\begin{equation}\label{deg_by_bound}
\deg(\tilde{b}y) = \deg(\tilde{b})+\deg(y) < \deg(\tilde{b})+\deg(\tilde{a}).
\end{equation}
Therefore,  by~\eqref{deg_homomorphism}, \eqref{coprime_LDE}, and~\eqref{deg_add_bound},
\begin{align*}
\deg(\tilde{a})+\deg(x)	
&=	 \deg(\tilde{a}x)
=	 \deg(\tilde{c}-\tilde{b}y) \\
&\leq  \max\{\deg(\tilde{c}),\,\deg(\tilde{b}y)\} 
<	\deg(\tilde{a}) + \deg(\tilde{b})\text{ by~(\ref{c_tilde_small}), (\ref{deg_by_bound}).}
\end{align*}
This implies 
$\deg(x) < \deg(\tilde{b}) =  \deg(b) - \deg(h)$,  which says that  $(x,y)$ is also degree-reducing in $b$.
But $(x',y')$ is the unique solution to~(\ref{LDE_again_again}) that is  degree-reducing in $b$, so  $(x,y)=(x',y')$.  
\qed
\end{proof}
%
	
\section{Generalized Polynomial  Division}\label{sec:LDE:Division}
We now generalize the  polynomial  division algorithm using the LDDRT to handle divisibility constraints on  remainders.  A specialization of this  will be used in the CCA  to factor diagonal delay matrices off of causal PR filter banks.

\subsection{Ideal-Theoretic Interpretation}  
Given polynomials $e$ and $f\neq 0$ in $\mathbb{F}[\zeta]$, the  polynomial division algorithm yields a unique quotient $q$ whose remainder, $r\eqdef e-fq$, satisfies $\deg(r)<\deg(f)$.
This forms the unique solution $(q,r)$ that is degree-reducing in $f$ for the LDE
\begin{equation}\label{classical_DA_LDE}
fq + 1r = e.
\end{equation}
%
Consequently, the coset $e + (f)$ of the ideal generated by $f$  contains a unique element, $r= e-fq$,  that satisfies  $\deg(r) < \deg(f)$, making it the unique element of minimum degree in $e + (f)$.

Theorem~\ref{thm:LDDRT} implies a far-reaching generalization of  classical polynomial division.  
Let polynomials $e$ and $f,g\neq 0$ be given such that $h\eqdef\gcd(f,g)\divides  e$.  
Theorem~\ref{thm:LDDRT}\eqref{DegRed_1} yields a unique solution $(q,p)$ that is degree-reducing in $f$ to 
\begin{equation}\label{GDA_LDE}
fq + gp = e,\mbox{ with } \deg(p) < \deg(f) - \deg(h).
\end{equation}
This makes $r \eqdef e - fq =  gp$  the  unique element of minimum degree in the coset-ideal \emph{intersection} $[e + (f)]\cap (g)$ since it is the unique element  that satisfies 
\begin{align*}
\deg(r) \,=\, \deg(p) + \deg(g) \,<\, \deg(f) - \deg(h) + \deg(g)\text{\ \ by \eqref{deg_homomorphism} and \eqref{GDA_LDE}.}
\end{align*}
This proves the following result.
\begin{corollary}[Generalized Polynomial Division Theorem]\label{cor:GDT}
Let $e,f,g\in\mathbb{F}[\zeta]$, $f,g\neq 0$. 
If $h\eqdef\gcd(f,g)$ divides $e$
then there exists a unique quotient $q$ whose remainder, $r\eqdef e-fq$, is divisible by $g$ and satisfies 
\[  \deg(r)<\deg(f) - \deg(h) +\deg(g).  \]
\end{corollary}

\subsection{A Constructive Generalized  Division Algorithm}  
In the lifting context,  $e$ and $f$ in  Corollary~\ref{cor:GDT} are the  polynomials in one row or column of a  transfer matrix, $\zeta=z^{-1}$,
and $g(\zeta)=z^{-M}$.  The  proof via the (nonconstructive) LDDRT (Theorem~\ref{thm:LDDRT}) will be replaced by a constructive proof (Theorem~\ref{thm:SGDT}) and a computational algorithm.  But first, one more definition.
\begin{definition}\label{defn:deg_red_mod_M}
Let  $e,f\in\mathbb{F}[\zeta]$, $f\neq 0$, and $M\geq 0$. Given quotient $q$,  the remainder $r\eqdef e-fq$ has (a root  at $\zeta=0$ of) \emph{multiplicity~$M$} if  $\zeta^M \divides r(\zeta)$.  We say  $r$  is \emph{degree-reducing modulo $M$} if  $r$ has multiplicity~$M$  and  satisfies
\[  \deg(r) < \deg(f) - \deg\gcd(f,\zeta^M) + M.  \]
\end{definition}

\rem  
In our usage,  ``$r$ has multiplicity~$M$''  includes the possibility that $\zeta^{M+1} \divides r$.
The classical  division algorithm~\cite[Algorithm~2.5]{GathenGerhard:13:Modern-Computer-Algebra}  provides existence and uniqueness of  quotients whose remainders  are degree-reducing modulo~0.

\begin{theorem}[Slightly Generalized  Division Theorem]\label{thm:SGDT}
Let $M\geq 0$, $e,f\in\mathbb{F}[\zeta]$,  $f\neq 0$.  If  $\gcd(f, \zeta^M) \divides e$ 
then there exists a unique quotient  $q^{(M)}$ whose remainder, $r^{(M)}\eqdef e-fq^{(M)}$,  is degree-reducing modulo $M$, 
\begin{gather}
\zeta^M \divides r^{(M)} \text{\quad and }\label{multiplicity_M}\\ 
\deg(r^{(M)}) < \deg(f) - \deg\gcd(f,\zeta^M) + M. \label{SGDA_deg_red_mod_M}
\end{gather}
\end{theorem}

\begin{proof}   Induction on $M$.

\emph{Case: $M=0$.}  This is the classical  polynomial division algorithm.

\emph{Case: $M>0$.}  Assume the theorem holds whenever  $\gcd(f, \zeta^{M-1}) \divides e$.  Suppose $e,f$ are such that $\gcd(f, \zeta^M) \divides e$; then  $\gcd(f, \zeta^{M-1}) \divides e$ so by hypothesis there exists a unique $q^{(M-1)}$ whose remainder, $r^{(M-1)}\eqdef e-fq^{(M-1)}$, satisfies
\begin{gather}
\zeta^{M-1} \divides r^{(M-1)} \text{\quad and }\label{multiplicity_M-1}\\ 
\deg(r^{(M-1)}) < \deg(f) - \deg\gcd(f,\zeta^{M-1}) + M - 1. \label{SGDA_deg_red_mod_M_minus1}
\end{gather}

$\bigl($Existence of $q^{(M)}\text{ and }r^{(M)}.)$   Note that
\begin{align}\label{deg-gcd-ineq}
\deg\gcd(f,\zeta^M) \leq \deg\gcd(f,\zeta^{M-1}) + 1.
\end{align}
Apply~\eqref{deg-gcd-ineq} to~\eqref{SGDA_deg_red_mod_M_minus1} to get a different bound on $\deg(r^{(M-1)})$,
\begin{align}
\deg(r^{(M-1)}) 
&< 
\deg(f) - \bigr(\deg\gcd(f,\zeta^{M-1}) + 1\bigr) + M\nonumber\\
&\leq 
\deg(f) - \deg\gcd(f,\zeta^{M}) + M.\label{deg_rM-1_bound}
\end{align}
If $r^{(M-1)}_{M-1}=0$   set $q^{(M)}\eqdef q^{(M-1)}$ and get  $r^{(M)}=r^{(M-1)}$. $r^{(M-1)}_{M-1}=0$ implies $\zeta^M\divides r^{(M-1)}=r^{(M)}$, yielding~\eqref{multiplicity_M}.  Since $r^{(M)}=r^{(M-1)}$, \eqref{deg_rM-1_bound} implies~\eqref{SGDA_deg_red_mod_M}, proving existence of $q^{(M)}$ and $r^{(M)}$ satisfying~\eqref{multiplicity_M}--\eqref{SGDA_deg_red_mod_M} when $r^{(M-1)}_{M-1}=0$.  

Now assume  that $r^{(M-1)}_{M-1}\neq 0$.  Define $m_f \eqdef \max\{i\geq 0:\zeta^i\divides f\}$; then
\begin{align}
f(\zeta) = \sum_{i=m_f}^{m} f_i\zeta^i\text{ where } f_{m_f}\neq 0\text{ and } m = \deg(f). \label{f_series}
\end{align}
Suppose $m_f\geq M$; then $\zeta^M=\gcd(f,\zeta^M)$ divides both $f$ and $e$ (by hypothesis) so $\zeta^M\divides r^{(M-1)}$ since $r^{(M-1)}=e - fq^{(M-1)}$, contradicting the assumption that $r^{(M-1)}_{M-1}\neq 0$.  This forces $m_f<M$, implying
\begin{align}\label{mf_eq_deggcd}
m_f = \deg\gcd(f,\zeta^M).
\end{align}
It also ensures that the following definition is a polynomial,
\begin{align}\label{def_qM}
q^{(M)}(\zeta) \eqdef q^{(M-1)}(\zeta) + f_{m_f}^{-1} r^{(M-1)}_{M-1} \zeta^{M-1-m_f}.
\end{align}
The remainder, $r^{(M)}\eqdef e-fq^{(M)}$, can now be written
\begin{align}\label{rM_formula}
r^{(M)}(\zeta) = r^{(M-1)}(\zeta) - f(\zeta) f_{m_f}^{-1} r^{(M-1)}_{M-1} \zeta^{M-1-m_f}.
\end{align}
Using~\eqref{f_series},  its $(M-1)^{th}$ (i.e., lowest-order) coefficient  is
\[  r^{(M)}_{M-1} = r^{(M-1)}_{M-1} - f_{m_f} f_{m_f}^{-1} r^{(M-1)}_{M-1} = 0.  \]
This implies $\zeta^M \divides r^{(M)}$ so~\eqref{multiplicity_M} is satisfied.
To prove~\eqref{SGDA_deg_red_mod_M} apply~\eqref{deg_add_bound} to~\eqref{rM_formula},
\begin{align}\label{rM_max_add_ineq}
\deg(r^{(M)}) \leq 
\max\bigl\{\deg(r^{(M-1)}),\,\deg(f) + M - 1 - m_f\bigr\}.
\end{align}
$\deg(r^{(M-1)})$ satisfies~\eqref{deg_rM-1_bound}  and by~\eqref{mf_eq_deggcd} the second argument in~\eqref{rM_max_add_ineq} is also strictly less than $\deg(f) - \deg\gcd(f,\zeta^{M}) + M$ so~\eqref{SGDA_deg_red_mod_M} is satisfied,
\begin{align*}
\deg(r^{(M)}) < \deg(f) - \deg\gcd(f,\zeta^{M}) + M.
\end{align*}

(Uniqueness.)  Suppose $(q',r')$ is another solution that is degree-reducing modulo $M$.
Subtract $r' = e - fq'$ from $r^{(M)} = e - fq^{(M)}$,  
\begin{align}\label{fqq_eq_rr}
\bigl(q' - q^{(M)}\bigr)f &= r^{(M)} - r' .
\end{align}
Take the degree of both sides,
\begin{align*}
\deg\bigl(q' - q^{(M)}\bigr) + \deg(f)
= 
\deg\bigl(r^{(M)} - r'\bigr)
&\leq 
\max\bigl\{\deg(r^{(M)}),\,\deg(r')\bigr\} \\
&< 
\deg(f) - \deg\gcd(f,\zeta^{M}) + M,  
\end{align*}
where the last inequality is hypothesis~\eqref{SGDA_deg_red_mod_M}.  Simplify this to get
\begin{align}
\deg\bigl(q' - q^{(M)}\bigr) &< M - \deg\gcd(f,\zeta^{M}). \label{M_minus_deg_gcd}
\end{align}
Define $m_f \eqdef \max\{i\geq 0:\zeta^i\divides f\}$ and write 
$f(\zeta) = \zeta^{m_f} f'(\zeta)$, where $\zeta\dividesnot f'$.  
If $\deg\gcd(f,\zeta^{M}) = M$ then  $q' = q^{(M)}$ by~\eqref{M_minus_deg_gcd} and we're done.

Otherwise, assume  $\deg\gcd(f,\zeta^{M}) < M$, which implies
$ m_f = \deg\gcd(f,\zeta^M).$
The  hypothesis~\eqref{multiplicity_M} for $r^{(M)}$ and $r'$ means that $\zeta^{M} \divides \bigl(r^{(M)} - r'\bigr)$, so~\eqref{fqq_eq_rr} implies
$\zeta^{M} \divides f\bigl(q' - q^{(M)}\bigr).$
Factor  $\zeta^{m_f}$ out of $f(\zeta)$ and cancel to infer that
$\zeta^{M-m_f} \divides f'\bigl(q' - q^{(M)}\bigr)$.   Since $\zeta\dividesnot f'$ this implies 
\[  \zeta^{M-m_f} \divides \bigl(q' - q^{(M)}\bigr),\text{\ \ where\ \ } M - m_f = M - \deg\gcd(f,\zeta^{M}) > 0.   \]
This contradicts~\eqref{M_minus_deg_gcd} unless $q' - q^{(M)} = 0$, proving uniqueness of $q^{(M)}$. 
\qed
\end{proof}

\rem
If $r^{(M)}$ is degree-reducing modulo~$M$ but has multiplicity~$M+k$, $k > 0$, then it  follows from~\eqref{SGDA_deg_red_mod_M} that it is also degree-reducing modulo~$M+k$, implying $q^{(M+k)}=q^{(M)}$ and  $r^{(M+k)}=r^{(M)}$.

\subsection{A Computational Generalized Division Algorithm}\label{sec:LDE:Division:Computational}
Theorem~\ref{thm:SGDT} yields an algorithm that generalizes the  classical  polynomial division algorithm. 
For technical reasons, the formal statement of the CCA 
assumes that   $f$ is coprime to $\zeta^M$, i.e., that  $f_0 \neq 0$ if  $M>0$, so we  simplify our Slightly Generalized Division Algorithm (SGDA) accordingly. 
We  also reduce the complexity of the initial loop (the classical division algorithm) a bit when $M>0$. Since we are assuming that $\deg\gcd(f,\zeta^{M})=0$, \eqref{SGDA_deg_red_mod_M} simplifies to $\deg(r)< \deg(f)+M$ so the SGDA  only needs to reduce  $\deg(r)$ by this much.  

Polynomials $f(\zeta)=\sum_{i=0}^m f_i\zeta^i$, $f_m\neq 0$,   are represented   by  coefficient vectors in bold italics, $\bsy{f}=(f_0,\ldots,f_m)$, where $\deg(\bsy{f})\eqdef \deg(f)=m$ and where
different-length  vectors are implicitly extended with zeros in sums.  Multiplication of $f(\zeta)$ by $\zeta^k$  right-shifts its coefficient vector by $k$,
\begin{equation}\label{right_shift}
(\tau_k \bsy{f})_n \eqdef \left\{
\begin{array}{cl}
0, & 0\leq n<k,   \\
f_{n-k}, & n\geq k.
\end{array}
\right.
\end{equation}
Thus,  $\deg(\tau_k\bsy{f})=\deg(\bsy{f})+k$.
The output condition $r^{(M)}_k=0$ for  $0\leq k\leq M-1$ in Algorithm~\ref{alg:SGDA}  is equivalent to \mbox{$\zeta^M \divides  r^{(M)}(\zeta)$.}
Left-arrows represent assignments to  registers (e.g., \mbox{$k\leftarrow k+1$} or  $\bsy{q}\leftarrow \bsy{0}$) that may be overwritten later.

\begin{algorithm}
\caption{\sf (Slightly Generalized Division Algorithm).}\label{alg:SGDA}
\begin{flushleft}
\textbf{Input:}  Integer $M\in\mathbb{N}$. Dividend  vector $\bsy{e}$ of degree $n$.  Divisor   $\bsy{f}\neq \mathbf{0}$ of degree $m\leq n$, with $f_0 \neq 0$ whenever  $M>0$.

\textbf{Output:}  Quotient  vector $\bsy{q}^{(M)}$ of degree $\leq n-m$.  Remainder  vector $\bsy{r}^{(M)}$ with $r^{(M)}_k=0$ for  $0\leq k\leq M-1$ satisfying the bound
$\deg({\bsy{r}^{(M)}})< \deg(\bsy{f})+M$.
\end{flushleft}

\algsetup{indent=2em, linenosize=\normalsize, linenodelimiter=.}
\begin{algorithmic}[1]
\STATE \textbf{initialize} $\bsy{q}\leftarrow \bsy{0}$\hfill(vector of  $n-m+1$ zeros)
\STATE \textbf{initialize} $\bsy{r}\leftarrow\bsy{e}$
\FOR{$(k\leftarrow n-m,\;k\geq M,\;k\leftarrow k-1)$}\label{alg:SGDA:for-}
	\IF{$r_{m+k}\neq 0$}
		\STATE $q_k\leftarrow f_m^{-1} r_{m+k}$
		\STATE $\bsy{r}\leftarrow\bsy{r} - q_k(\tau_k\bsy{f})$\hfill(renders $r_{m+k}=0$)
	\ENDIF
\ENDFOR
\FOR{$(k\leftarrow 0,\;k< M,\;k\leftarrow k+1)$}\label{alg:SGDA:for+}
	\IF{$r_k\neq 0$}
		\STATE $q_k\leftarrow f_0^{-1}r_k$
		\STATE $\bsy{r}\leftarrow\bsy{r} - q_k(\tau_k\bsy{f})$\hfill(renders $r_k=0$)
	\ENDIF
\ENDFOR
\RETURN $\bsy{q}^{(M)} = \bsy{q}$, $\bsy{r}^{(M)} = \bsy{r}$
\end{algorithmic}
\end{algorithm}
%

\rem
Line~\ref{alg:SGDA:for-}: When $M=0$ this loop implements the classical  division algorithm.   When $M>0$ it only zeros out enough high-order terms to ensure   
\begin{align}\label{deg-reduction-limit}
\deg({\bsy{r}^{(M)}})< \deg(\bsy{f})+M = m+M. 
\end{align}
If $n-m<M$ then~\eqref{deg-reduction-limit}  is satisfied by the initial condition $\deg({\bsy{r}})=\deg({\bsy{e}})=n$ so this loop is not traversed.

Line~\ref{alg:SGDA:for+}: This loop is only traversed  if $M>0$, which assumes  $f_0 \neq 0$.  It ensures that $\zeta^M \divides  r^{(M)}(\zeta)$ while preserving the bound $\deg({\bsy{r}^{(M)}})< \deg(\bsy{f})+M$.

\section{Case Study: The  Cubic B-Spline Binomial Filter Bank}\label{sec:CubicBSpline} 
CDF(7,5)  is a biorthogonal  wavelet filter bank  constructed by Cohen, Daubechies, and Feauveau 
\cite[\S 6.A]{CohenDaubFeau92}, \cite[\S 8.3.4]{Daub92}; the analog synthesis scaling function and mother wavelet generate cubic B-splines.   Reverting to  Z-transform notation ($\zeta\leftarrow z^{-1}$), the noncausal synthesis filters are
\begin{align*}
P_0(z) &= (z^2 + 4z + 6 + 4z^{-1} + z^{-2})/8,\\
P_1(z) &= (-3z^2 - 12z - 5 + 40z^{-1} - 5z^{-2} - 12z^{-3} - 3z^{-4})/32.
\end{align*}
Daubechies and Sweldens~\cite[\S 7.8]{DaubSwel98}  factored the  unimodular  PWA synthesis matrix $\mathbf{P}(z)$ into  linear phase lifting steps using the Laurent polynomial EEA. The corresponding PWA analysis matrix $\mathbf{A}(z)$ and its  factorization are
%
\begin{align}
\mathbf{A}(z) 
&=
\begin{bmatrix}
\sst (-3z + 10 - 3z^{-1})/8  &\,\sst  (3z + 5 + 5z^{-1} + 3z^{-2})/32 \\
\sst -(z + 1)/2 &\,\sst (z + 6 + z^{-1})/8
\end{bmatrix},\quad
|\mathbf{A}(z)| =  1,\nonumber\\
&=
\begin{bmatrix}
\sst 2 &\sst  0\\
\sst 0 &\sst  1/2
\end{bmatrix}
\hspace{-4pt}
\begin{bmatrix}
\sst 1 &\sst  3(1 + z^{-1})/16\\
\sst 0 &\sst  1
\end{bmatrix}
\hspace{-4pt}
\begin{bmatrix}
\sst 1 &\sst  0\\
\sst -(z + 1) &\sst  1
\end{bmatrix}
\hspace{-4pt}
\begin{bmatrix}
\sst 1 &\sst  -(1 + z^{-1})/4\\
\sst 0 &\sst  1
\end{bmatrix}.
\label{CDF75_PWA_anal}
\end{align}
As with~(\ref{LGT_GLF}), this is the unique WS group lifting 
factorization of $\mathbf{A}(z)$~\cite{Bris:10:GLS-I,Bris:10b:GLS-II}; its lifting filters require just two multipliers.
We will now factor the  7-tap/5-tap \emph{causal} PWD analysis matrix, 
\begin{align}\label{CDF75}
\mathbf{H}(z)
&=
\begin{bmatrix}
\sst  (3 + 5z^{-1} + 5z^{-2} + 3z^{-3})/32 &\,\sst  (-3 + 10z^{-1} - 3z^{-2})/8\\
\sst (1 + 6z^{-1} + z^{-2})/8 &\,\sst  -(1 + z^{-1})/2 
\end{bmatrix},\quad 
|\mathbf{H}(z)| =  -z^{-2}.
\end{align}
%

\subsection{Causal EEA Factorization in Column~1}\label{sec:CubicBSpline:EEA_Col1}  
Initialize the remainders for the EEA,
$r_0 \eqdef H_{01}(z)$ 
and
$r_1 \eqdef H_{11}(z).$ 
Using the   division algorithm, $r_0 = q_0 r_1 + r_2$ where $q_0 = (-13 + 3z^{-1})/4$, $r_2 = -2$, and
\begin{align*}
\begin{pmatrix}
r_0\\
r_1
\end{pmatrix}
=
\mathbf{M}_0
\begin{pmatrix}
r_1\\
r_2
\end{pmatrix}
\text{ for\ \ }
\mathbf{M}_0 \eqdef
\begin{bmatrix}
q_0 & 1\\
1 & 0
\end{bmatrix} .
\end{align*}
Next, $r_1 = q_1 r_2 + r_3$ where $q_1 = (1 + z^{-1})/4$,  $r_3 = 0$, and
\begin{align*}
\begin{pmatrix}
r_1\\
r_2
\end{pmatrix}
=
\mathbf{M}_1
\begin{pmatrix}
r_2\\
0
\end{pmatrix}
\text{ for\ \ }
\mathbf{M}_1 \eqdef
\begin{bmatrix}
q_1 & 1\\
1 & 0\vspace{-1pt}  
\end{bmatrix} .
\end{align*}
Now define an augmentation matrix with causal filters $a_0$ and $a_1$,
\begin{align*}
\mathbf{H}'(z) 
&\eqdef
\begin{bmatrix}
a_0 & r_0\\
a_1 & r_1\vspace{1pt}  
\end{bmatrix}
=
\mathbf{M}_0\mathbf{M}_1
\begin{bmatrix}
 0 &  r_2 \\
 -|\mathbf{H}|/r_2 &  0\vspace{-1pt}  
\end{bmatrix}\\
&=
\begin{bmatrix}
  z^{-2}(13 -  3z^{-1})/8 &\,  (-3 + 10z^{-1} - 3z^{-2})/8\\
 -z^{-2}/2 &\,  -(1 + z^{-1})/2 
\end{bmatrix} .
\end{align*}
Apply the Lifting Theorem and transform to standard causal lifting form.

\begin{align}
\mathbf{H}(z) 
&= \mathbf{H}'(z)\,\mathbf{V}(z)
= \mathbf{H}'(z) 
\begin{bmatrix}
1 & 0\\
V(z)  & 1
\end{bmatrix} ,\text{ where } V(z) = -(1 + 5z^{-1})/4,\nonumber\\
&\hspace{-1em}=
(\mathbf{M}_0\,\mathbf{J})(\mathbf{J}\,\mathbf{M}_1)\,\diag(-2,-1/2)\,\diag(1,z^{-2})\,\mathbf{J}\,\mathbf{V}(z);
\text{ use~\eqref{symbolic_DIF}, \eqref{dbl_transpose},}\nonumber\\
&\hspace{-1em}=
\begin{bmatrix}
\sst -2 &\sst  0\\
\sst 0 &\sst  -1/2
\end{bmatrix}
\hspace{-4pt}
\begin{bmatrix}
\sst 1 &\sst  (-13 + 3z^{-1})/16\\
\sst 0 &\sst  1
\end{bmatrix}
\hspace{-4pt}
\begin{bmatrix}
\sst 1 &\sst  0\\
\sst 1 + z^{-1} &\sst 1
\end{bmatrix}
\hspace{-4pt}
\begin{bmatrix}
\sst 1 & \sst 0\\
\sst 0 &\sst z^{-2}
\end{bmatrix}
\hspace{-4pt}
\begin{bmatrix}
\sst 1 &\sst  -(1 + 5z^{-1})/4\\
\sst 0 &\sst  1
\end{bmatrix}
\hspace{-4pt}
\begin{bmatrix}
\sst 0 &\sst  1\\
\sst 1 &\sst  0
\end{bmatrix} .\label{CDF75_EEA_col1}
\end{align}

Unlike what happened when factoring LGT(5,3) in Section~\ref{sec:Study:EEA:Col1}, factoring CDF(7,5) using the causal EEA in column~1 does \emph{not} produce a causal analogue of the unimodular linear phase  lifting factorization~(\ref{CDF75_PWA_anal}). Note how the entire determinantal delay of $\mathbf{H}(z)$, which was introduced via  the augmentation matrix $\mathbf{H}'(z)$, winds up in a single diagonal delay matrix, $\diag(1,z^{-2})$.

\subsection{CCA Factorization in Column~1}\label{sec:CubicBSpline:CCA_Col1}  
Initialize  $\mathbf{Q}_0(z)\eqdef \mathbf{H}(z)$; set  $E_1\leftarrow H_{01}$  and $F_1\leftarrow H_{11}$, $\deg(F_1)<\deg(E_1)$, and seek a lifting factorization, 
\begin{align}
\mathbf{Q}_0(z)
&=
\begin{bmatrix}
  (3 + 5z^{-1} + 5z^{-2} + 3z^{-3})/32 &\;  (-3 + 10z^{-1} - 3z^{-2})/8\\
 (1 + 6z^{-1} + z^{-2})/8 & -(1 + z^{-1})/2 
\end{bmatrix} \nonumber\\
&=
\begin{bmatrix}
E_0 & E_1\\
F_0 & F_1\vspace{-1pt}
\end{bmatrix}
=
\begin{bmatrix}
1 & S\\
0 & 1\vspace{-1pt}%
\end{bmatrix}
\hspace{-3pt}
\begin{bmatrix}
R_0 & R_1\\
F_0 & F_1\vspace{-1pt}%
\end{bmatrix}.\label{CDF75_col1_CCA_step0_form}
\end{align}
%
Divide $F_1$ into $E_1$ using   classical polynomial  division, $E_1 = F_1S + R_1$,  where 
\begin{align*}
S(z) = (-13 + 3z^{-1})/4\text{ and }R_1(z) = -2,\; \deg(R_1)  < \deg(F_1).
\end{align*}
Define $R_0 \eqdef E_0 - F_0S = (1 + 5z^{-1})/2$; the first factorization step is
\begin{align}
\mathbf{Q}_0(z) 
&=
\begin{bmatrix}
\sst  1 &\sst   (-13 + 3z^{-1})/4\\
\sst  0 &\sst   1%
\end{bmatrix}
\hspace{-4pt}
\begin{bmatrix}
\sst  (1 + 5z^{-1})/2 &\,\sst  -2\\
\sst (1 + 6z^{-1} + z^{-2})/8 &\,\sst  -(1 + z^{-1})/2 
\end{bmatrix}
=
\mathbf{V}_0(z)\mathbf{Q}_1(z). \label{CDF75_col1_CCA_step0_result}
\end{align}

Reset the labels $E_j\leftarrow F_j$ and $F_j\leftarrow R_j$ in $\mathbf{Q}_1(z)$,
\begin{align}\label{CDF75_col1_CCA_step1_form}
\mathbf{Q}_1(z) 
&=
\begin{bmatrix}
\sst  (1 + 5z^{-1})/2 &\,\sst  -2\\
\sst (1 + 6z^{-1} + z^{-2})/8 &\,\sst  -(1 + z^{-1})/2 
\end{bmatrix}
=
\begin{bmatrix}
F_0 & F_1\\
E_0 & E_1\vspace{-1pt}
\end{bmatrix}
=
\begin{bmatrix}
1 & 0\\
S & 1\vspace{-1pt}%
\end{bmatrix}
\hspace{-3pt}
\begin{bmatrix}
F_0 & F_1\\
R_0 & R_1\vspace{-1pt}%
\end{bmatrix}.
\end{align}
Dividing in column~1,
%
%
$E_1 = F_1S + R_1$ for $S(z) = (1 + z^{-1})/4$ with $R_1 = 0$.
Set $R_0\leftarrow E_0 - F_0S = -z^{-2}/2$; the second step is
\begin{align}\label{CDF75_col1_CCA_step1_result}
\mathbf{Q}_1(z) 
&=
\begin{bmatrix}
\sst  1 &\sst   0\\
\sst  (1 + z^{-1})/4 &\sst   1%
\end{bmatrix}
\hspace{-4pt}
\begin{bmatrix}
\sst  1 &\sst   0\\
\sst  0 &\sst z^{-2}%
\end{bmatrix}
\hspace{-4pt}
\begin{bmatrix}
\sst  (1 + 5z^{-1})/2 &\,\sst  -2\\
\sst -1/2 &\,\sst 0
\end{bmatrix}
=
\mathbf{V}_1(z)\bsy{\Delta}_1(z)\mathbf{Q}_2(z).
\end{align}

Factor  a diagonal gain matrix and a swap  off of $\mathbf{Q}_2(z)$,
\begin{align}\label{CDF75_col1_CCA_step2_result}
\mathbf{Q}_2(z)
&=
\begin{bmatrix}
\sst  -2 &\,\sst  0\\
\sst 0 &\,\sst -1/2
\end{bmatrix}
\hspace{-4pt}
\begin{bmatrix}
\sst  1 &\,\sst  -(1 + 5z^{-1})/4\\
\sst 0 &\sst 1
\end{bmatrix}
\hspace{-4pt}
\begin{bmatrix}
\sst  0 &\,\sst 1 \\
\sst 1 &\,\sst 0
\end{bmatrix}
= \mathbf{D}_{-2,-1/2}\mathbf{V}_2(z)\mathbf{J}.
\end{align}
Combine~\eqref{CDF75_col1_CCA_step0_result}, \eqref{CDF75_col1_CCA_step1_result}, and~\eqref{CDF75_col1_CCA_step2_result} to get the same factorization as~\eqref{CDF75_EEA_col1},
\begin{align*}
\mathbf{H}(z)
&=
\mathbf{V}_0(z)\,\mathbf{V}_1(z)\,\bsy{\Delta}_1(z)\,\mathbf{D}_{-2,-1/2}\mathbf{V}_2(z)\,\mathbf{J}\\
&=
-\mathbf{D}_{2,1/2}\,\bigl(\lowergam{2,1/2}\!\mathbf{V}_0(z)\bigr)\bigl(\lowergam{2,1/2}\!\mathbf{V}_1(z)\bigr)\bsy{\Delta}_1(z)\,\mathbf{V}_2(z)\,\mathbf{J},\text{ using~\eqref{symbolic_DIF},}\\
&=
\begin{bmatrix}
\sst -2 &\sst  0\\
\sst 0 &\sst  -1/2
\end{bmatrix}
\hspace{-4pt}
\begin{bmatrix}
\sst 1 &\sst  (-13 + 3z^{-1})/16\\
\sst 0 &\sst  1
\end{bmatrix}
\hspace{-4pt}
\begin{bmatrix}
\sst 1 &\sst  0\\
\sst 1 + z^{-1} &\sst 1
\end{bmatrix}
\hspace{-4pt}
\begin{bmatrix}
\sst 1 & \sst 0\\
\sst 0 &\sst z^{-2}
\end{bmatrix}
\hspace{-4pt}
\begin{bmatrix}
\sst 1 &\sst  -(1 + 5z^{-1})/4\\
\sst 0 &\sst  1
\end{bmatrix}
\hspace{-4pt}
\begin{bmatrix}
\sst 0 &\sst  1\\
\sst 1 &\sst  0
\end{bmatrix}.
\end{align*}
%


\subsubsection{Enumerating Degree-Lifting Factorizations}\label{sec:CubicBSpline:CCA_Col1:Unique} 
Theorem~\ref{thm:LDDRT} allows us, in principle, to enumerate  all possible degree-lifting decompositions of a given filter bank. 
E.g., what are the possible degree-reducing causal complements to $(F_0,F_1)$ for inhomogeneity $\hat{a}z^{-\hat{d}} = -z^{-2}$ in \eqref{CDF75_col1_CCA_step0_form}?  Since
%
\begin{align*}
\deg|\mathbf{Q}_1| = 
\deg\begin{vmatrix}
R_0 & R_1\\
F_0 & F_1\vspace{-1pt}%
\end{vmatrix}
&= \deg|\mathbf{Q}_0| 
= 2
< \deg(F_0) + \deg(F_1) 
= 3,
\end{align*}
Theorem~\ref{thm:LDDRT}\eqref{DegRed_3} predicts that  division in column~0 of $\mathbf{Q}_0$ will yield the same causal complement  $(R_0,R_1)$ (the top row of $\mathbf{Q}_1$ in~\eqref{CDF75_col1_CCA_step0_result}) that was obtained by division in column~1 of $\mathbf{Q}_0$.  This is confirmed by noting that the remainders  in the top row of $\mathbf{Q}_1$ are  degree-reducing in \emph{both} $F_0$ and $F_1$ and are therefore, by Theorem~\ref{thm:LDDRT}(\ref{DegRed_1}--\ref{DegRed_2}), the unique remainders given by division in \emph{either} column.

Factoring $\mathbf{Q}_1(z)$ as in~\eqref{CDF75_col1_CCA_step1_form} is another matter, however, because
\begin{align*}
\deg\begin{vmatrix}
F_0 & F_1\\
R_0 & R_1\vspace{-1pt}%
\end{vmatrix}
=
\deg|\mathbf{Q}_1| = 2 > \deg(F_0) + \deg(F_1)  = 1. 
\end{align*}
By Theorem~\ref{thm:LDDRT}\eqref{DegRed_3} the causal complements that are degree-reducing in $F_0$ and $F_1$ are \emph{different}. Division in column~1 of  $\mathbf{Q}_1(z)$ produces~\eqref{CDF75_col1_CCA_step1_result}, which leads to~\eqref{CDF75_EEA_col1}, so dividing in column~0  produces a different  result than~\eqref{CDF75_EEA_col1}.  There is no  EEA analogue of ``switching columns'' like this. 

This notion of enumerating degree-lifting decompositions of a given filter bank is pursued in~\cite[\S 2]{Bris:24:Causal-Complementation-Algorithm} where we introduce a bookkeeping schema for keeping track of the options for forming different lifting factorizations. This schema is incorporated into the  formulation of the Causal Complementation Algorithm~\cite[Algorithm~1]{Bris:24:Causal-Complementation-Algorithm}. A provably complete example using the Daubechies 4-tap/4-tap paraunitary filter bank is presented~\cite[Table~2]{Bris:24:Causal-Complementation-Algorithm} that enumerates all ``left degree-lifting'' factorizations (i.e., factorizations constrained by  partial pivoting to employ only elementary row reductions).

\subsection{CCA Factorization Using the SGDA in Column~1}\label{sec:CubicBSpline:SGDA}  
Another serious limitation of the causal EEA method is that  it  puts the entire determinant  in a \emph{single} diagonal delay matrix, e.g.,     $\diag(1, z^{-2})$  in~\eqref{CDF75_EEA_col1}. The CCA  allows one to factor out delays at arbitrary points in the factorization  using the SGDA, which  lets us construct CCA factorizations  not obtainable using the causal EEA.  We now derive a \emph{causal} version  of the linear phase factorization~\eqref{CDF75_PWA_anal}; as in Section~\ref{sec:CubicBSpline:CCA_Col1} we seek a factorization of the form~\eqref{CDF75_col1_CCA_step0_form},
\begin{align}
\begin{bmatrix}
\sst  (3 + 5z^{-1} + 5z^{-2} + 3z^{-3})/32 &\,\sst  (-3 + 10z^{-1} - 3z^{-2})/8\\
\sst (1 + 6z^{-1} + z^{-2})/8 &\,\sst  -(1 + z^{-1})/2 
\end{bmatrix} \label{CDF75_again}
=
\begin{bmatrix}
1 & S\\
0 & 1\vspace{-1pt}%
\end{bmatrix}
\hspace{-3pt}
\begin{bmatrix}
R_0 & R_1\\
F_0 & F_1\vspace{-1pt}%
\end{bmatrix}.
\end{align}
This time, however, dividing $F_1$ into $E_1$ using  the SGDA  with $M=1$ yields
\begin{align}\label{CDF75_col1_SGDA_first_div}
S(z) = 3(1 + z^{-1})/4 \text{\ \ with\ \ } R_1= 2z^{-1},\; \deg(R_1) < \deg(F_1) + 1.  
\end{align}
Set $R_0\leftarrow E_0 - F_0S  = -z^{-1}(1+z^{-1})/2$ and note that $z^{-1}$ also divides  $R_0(z)$ 
(this follows from~\cite[Theorem~2.3]{Bris:24:Causal-Complementation-Algorithm}).
The first lifting step can  be written
\begin{align}\label{CDF75_col1_SGDA_step0}
\mathbf{Q}_0(z)
&=
\begin{bmatrix}
\sst  1 &\sst   3(1 + z^{-1})/4\\
\sst  0 &\sst   1\vspace{-1pt}%
\end{bmatrix}
\hspace{-4pt}
\begin{bmatrix}
\sst  z^{-1} &\sst   0\\
\sst  0 &\sst   1\vspace{-1pt}%
\end{bmatrix}
\hspace{-4pt}
\begin{bmatrix}
\sst  -(1 + z^{-1})/2 &\sst  2\\
\sst (1 + 6z^{-1} + z^{-2})/8 &\,\sst  -(1 + z^{-1})/2 
\end{bmatrix}\\
&=
\mathbf{V}_0(z)\bsy{\Delta}_0(z)\mathbf{Q}_1(z). \nonumber
\end{align}
%

Next, seek a factorization of $\mathbf{Q}_1(z)$ of the  form 
\begin{align}
\mathbf{Q}_1(z)
&=
\begin{bmatrix}
\sst  -(1 + z^{-1})/2 &\,\sst  2\\
\sst (1 + 6z^{-1} + z^{-2})/8 &\,\sst  -(1 + z^{-1})/2 
\end{bmatrix}
=
\begin{bmatrix}
F_0 & F_1\\
E_0 & E_1\vspace{-1pt}
\end{bmatrix}
=
\begin{bmatrix}
1 & 0\\
S & 1\vspace{-1pt}%
\end{bmatrix}
\hspace{-3pt}
\begin{bmatrix}
F_0 & F_1\\
R_0 & R_1\vspace{-1pt}%
\end{bmatrix}.\label{CDF75_col1_SGDA_step1_form}
\end{align}
Dividing in column~1 using the classical polynomial division algorithm produces
$S(z) = -(1 + z^{-1})/4$ and $R_1=0$.
Set $R_0\leftarrow E_0 - F_0S = z^{-1}/2$; both $R_0$ and $R_1$ are divisible by $z^{-1}$ so the second lifting step can be written 
\begin{align}\label{CDF75_col1_SGDA_step1}
\mathbf{Q}_1(z)
&=
\begin{bmatrix}
\sst  1 &\,\sst   0\\
\sst  -(1 + z^{-1})/4 &\,\sst   1
\end{bmatrix}
\hspace{-4pt}
\begin{bmatrix}
\sst  1 &\sst   0\\
\sst  0 &\sst   z^{-1}
\end{bmatrix}
\hspace{-4pt}
\begin{bmatrix}
\sst  -(1 + z^{-1})/2 &\,\sst  2\\
\sst 1/2 &\,\sst  0 
\end{bmatrix}
=
\mathbf{V}_1(z)\bsy{\Delta}_1(z)\mathbf{Q}_2(z).
\end{align}

Factor a diagonal gain matrix and a swap matrix off of $\mathbf{Q}_2(z)$ to get
\begin{align}\label{CDF75_col1_SGDA_step2}
\mathbf{Q}_2(z)
&=
\begin{bmatrix}
\sst 2 &\sst  0\\
\sst 0 &\sst  1/2
\end{bmatrix}
\hspace{-4pt}
\begin{bmatrix}
\sst 1 &\sst  -(1 + z^{-1})/4\\
\sst 0 &\sst  1
\end{bmatrix}
\hspace{-4pt}
\begin{bmatrix}
\sst 0 &\sst  1\\
\sst 1 &\sst  0
\end{bmatrix}
=
\mathbf{D}_{2,1/2}\mathbf{V}_2(z)\,\mathbf{J}.
\end{align}
Combine~\eqref{CDF75_col1_SGDA_step0}, \eqref{CDF75_col1_SGDA_step1}, and~\eqref{CDF75_col1_SGDA_step2}
and put  in  standard causal lifting form~\eqref{std_causal_form}, 
\begin{align}
\mathbf{H}(z) 
&=
\mathbf{V}_0(z)\,\bsy{\Delta}_0(z)\,\mathbf{V}_1(z)\,\bsy{\Delta}_1(z)\,\mathbf{D}_{2,1/2}\,\mathbf{V}_2(z)\,\mathbf{J}\nonumber\\
&=
\mathbf{D}_{2,1/2}\,\mathbf{U}_{2}(z)\,\bsy{\Lambda}_{2}(z)\,\mathbf{U}_{1}(z)\,\bsy{\Lambda}_1(z)\,\mathbf{U}_0(z)\,\mathbf{J}\text{\ \ using~\eqref{symbolic_DIF},}\nonumber\\
&=
\begin{bmatrix}
\sst 2 &\sst  0\\
\sst 0 &\sst  1/2
\end{bmatrix}
\hspace{-4pt}
\begin{bmatrix}
\sst 1 &\sst  3(1 + z^{-1})/16\\
\sst 0 &\sst  1
\end{bmatrix}
\hspace{-4pt}
\begin{bmatrix}
\sst  z^{-1} &\sst   0\\
\sst  0 &\sst   1
\end{bmatrix}
\hspace{-4pt}
\begin{bmatrix}
\sst 1 &\sst  0\\
\sst -(1 + z^{-1}) &\sst  1
\end{bmatrix}
\hspace{-4pt}
\begin{bmatrix}
\sst  1 &\sst   0\\
\sst  0 &\sst   z^{-1}
\end{bmatrix}
\hspace{-4pt}
\begin{bmatrix}
\sst 1 &\sst  -(1 + z^{-1})/4\\
\sst 0 &\sst  1
\end{bmatrix}
\hspace{-4pt}
\begin{bmatrix}
\sst  0 &\sst   1\\
\sst  1 &\sst   0
\end{bmatrix}.\label{CDF75_PWD_WSGLS_anal_lifting}
\end{align}

This is a causal analogue of the unimodular  WS group  lifting factorization~\eqref{CDF75_PWA_anal}; its causal linear phase lifting filters differ from those in~\eqref{CDF75_PWA_anal} by at most  delays. The  reader can confirm that \eqref{CDF75_PWD_WSGLS_anal_lifting} is also obtained using the  SGDA  with $M=1$ in row~1 of~\eqref{CDF75}  when computing the  lifting  by column reduction.
In comparison, \eqref{CDF75_PWD_WSGLS_anal_lifting} is \emph{not} produced by  the causal EEA method in \emph{any} row or column of~\eqref{CDF75}.  One might wonder what happens if one uses the SGDA with $M=1$ (as in~\eqref{CDF75_col1_SGDA_first_div}) to perform the first division operation in the EEA calculation in Section~\ref{sec:CubicBSpline:EEA_Col1}. This cannot possibly produce~\eqref{CDF75_PWD_WSGLS_anal_lifting}, however, because the augmentation matrix will still put the entire determinantal delay ($z^{-2}$) into a single diagonal delay matrix. Indeed, instead of~\eqref{CDF75_EEA_col1} or~\eqref{CDF75_PWD_WSGLS_anal_lifting} the EEA method using the SGDA in this manner yields the factorization
%
%
\begin{align}
\mathbf{H}(z) 
&=
\begin{bmatrix}
\sst 2 &\sst  0\\
\sst 0 &\sst  -1/2
\end{bmatrix}
\hspace{-4pt}
\begin{bmatrix}
\sst 1 &\sst  -3(1 + z^{-1})/16\\
\sst 0 &\sst  1
\end{bmatrix}
\hspace{-4pt}
\begin{bmatrix}
\sst 1 &\sst  0\\
\sst 1 &\sst 1
\end{bmatrix}
\hspace{-4pt}
\begin{bmatrix}
\sst 1 &\sst z^{-1}\\
\sst 0 &\sst  1
\end{bmatrix}
\hspace{-4pt}
\begin{bmatrix}
\sst z^{-2} & \sst 0\\
\sst 0 &\sst 1
\end{bmatrix}
\hspace{-4pt}
\begin{bmatrix}
\sst 1 &\sst 0\\
\sst  -(1 + 5z^{-1})/4 &\sst  1
\end{bmatrix} .\label{CDF75_EEA_col1_SGDA}
\end{align}
Since both  use the SGDA division in~\eqref{CDF75_col1_SGDA_first_div}, formulas~\eqref{CDF75_PWD_WSGLS_anal_lifting} and~\eqref{CDF75_EEA_col1_SGDA} have the same leftmost lifting filter, modulo a sign difference attributable to using~\eqref{symbolic_DIF} with the diagonal matrix in~\eqref{CDF75_EEA_col1_SGDA}.

\section{Conclusions}\label{sec:Conclusions}
We  introduce  a new   method,  the \emph{Causal Complementation Algorithm} (CCA), for factoring causal FIR perfect reconstruction transfer matrices into causal lifting steps (elementary matrices). 
Daubechies and Sweldens~\cite{DaubSwel98} factored \emph{unimodular}  filter banks using the Extended Euclidean Algorithm (EEA) for Laurent polynomials, but the CCA is more general, using Gaussian elimination to generate more causal lifting factorizations than those produced using the causal version of the EEA method. This includes a \hyperref[CDF75_PWD_WSGLS_anal_lifting]{causal linear phase analogue} of a  \hyperref[CDF75_PWA_anal]{unimodular linear phase  lifting factorization} for the CDF(7,5) cubic B-spline wavelet filter bank that is not generated by the causal EEA approach.

The causal EEA  works  within a single row or column of $\mathbf{H}(z)$, effectively limiting the EEA to constructing at most four lifting factorizations of a given filter bank (using classical polynomial division), no matter how big the matrix polynomial order of $\mathbf{H}(z)$ is.
The EEA method  constructs an augmentation matrix that is in turn lifted to $\mathbf{H}(z)$,  further limiting possible factorizations by putting the entire determinant of  $\mathbf{H}(z)$ into a \hyperref[CDF75_EEA_col1]{single diagonal delay matrix}, which can  lead to ill-conditioned factorizations~\cite{Bris:24:Causal-Complementation-Algorithm}.  
The CCA, in comparison, factors the whole matrix $\mathbf{H}(z)$ from the outset and enables strategies like \hyperref[sec:Study:Other]{switching the row or column in which division takes place} that have no EEA analogue.
The CCA allows the user to factor off diagonal delay matrices at will using the \hyperref[alg:SGDA]{Slightly Generalized Division Algorithm},  expanding the choice of lifting factorizations that can be obtained.

A  \hyperref[deg_add_bound]{max-additive inequality} connects causality to uniqueness of degree-reducing solutions to  linear Diophantine equations over causal \emph{polynomial} rings via a \hyperref[thm:LDDRT]{Linear Diophantine Degree-Reduction Theorem}; this inequality fails for \emph{Laurent} polynomials.  The LDDRT  uses the determinantal degree (information that is missing from unimodular normalizations) to decide whether a  causal  lifting step can be factored off in exactly one or two distinct ways. This allows users in principle to systematically generate all possible  \emph{degree-lifting} factorizations of a given causal filter bank, a capability not provided by the EEA approach. The author maintains that the degree-reducing properties of such factorizations should distinguish \emph{lifting} factorizations amongst all elementary matrix decompositions of polyphase matrices, a definition not made in~\cite{DaubSwel98}.

Work in progress  includes  specification of the CCA in algorithmic form,  complexity analysis showing the  advantages of the CCA over the EEA~\cite{Bris:24:Causal-Complementation-Algorithm}, specializations to linear phase filter banks, and realization theory for causal lifting.

\bibliographystyle{elsarticle-num}
\bibliography{CMBstring,CMBpubs,acad-press,elsevier,IEEE,Math-Soc,Misc,Oxbridge,prentice-hall,springer,standards,Theses,CMBcrossref}

\end{document}